%% file: main.tex
\documentclass[final]{IEEEtran}
\usepackage{amsthm,amssymb,graphicx,multirow,amsmath,color,amsfonts,physics}%,ulem}
\usepackage[update,prepend]{epstopdf}
\usepackage[noadjust]{cite}
\usepackage{tikz}
\usepackage{bbm} % for \nbb1 
\usepackage{pdfpages}
\usepackage{balance}
\usepackage{multirow}
\usepackage{comment}
\usepackage{subfigure}
\allowdisplaybreaks % Allows breaking of eqnarray over multiple pages (avoids unnecessary blanks in the document before eqnarray)

\begin{document}
\include{notation}
\pagenumbering{gobble}
\graphicspath{{./Figures/}}
\title{Performance Evaluation of HAPS-enabled\\ Coverage Enhancement in Hard-to-Reach Areas}
\author{
 Hao Lin,~\IEEEmembership{Graduate Student Member,~IEEE}, Mustafa A. Kishk,~\IEEEmembership{Member,~IEEE}\\ and Mohamed-Slim Alouini,~\IEEEmembership{Fellow,~IEEE}
\thanks{Hao Lin is with the Electrical and Computer Engineering Program, Computer, Electrical and Mathematical Sciences and Engineering Division (CEMSE), King Abdullah University of Science and Technology (KAUST),
Thuwal 23955-6900, Saudi Arabia (e-mail: hao.lin.std@gmail.com).\\
\indent Mustafa A. Kishk is with the Department of Electronic Engineering Program,
Maynooth University, Maynooth, W23 F2H6 Ireland (e-mail:
mustafa.kishk@mu.ie).\\%National University of Ireland
\indent Mohamed-Slim Alouini is with the CEMSE Division, King Abdullah
University of Science and Technology (KAUST), Thuwal 23955-6900,
Saudi Arabia (e-mail: slim.alouini@kaust.edu.sa).}
}

\maketitle
\vspace{-1.8cm}%\vspace{-1.8cm}
\begin{abstract}
High altitude platform stations (HAPSs) are becoming a key component of future non-terrestrial networks (NTNs). HAPSs can serve a larger area than uncrewed aerial vehicles (UAVs) and offer lower propagation latency, maintenance expense, and energy costs than satellites. A major application of HAPSs is to serve the areas where terrestrial network (TN) deployment is infeasible, especially in hard-to-reach areas and post-disaster areas. For instance, in the Amazon rainforest, the Mediterranean region, and deserts, TN deployment is severely constrained by geographical and environmental conditions. Only areas close to transportation networks or coastlines can be covered, while large areas remain uncovered. Such coverage holes in hard-to-reach areas are typically overlooked in existing literature. Motivated by these
realistic cases, in this paper, we use tools from stochastic geometry to mathematically model hard-to-reach areas where cellular terrestrial infrastructure only exists at their perimeter. We propose to deploy a HAPS constellation over this hard-to-reach area to enhance connectivity. For that setup, we derive the downlink (DL) and uplink (UL) coverage performance of
the considered user equipment (UE) as a function of the location of the UE inside the coverage hole. Our results show how the number of HAPSs, beamwidth, and HAPS altitude affect the
DL and UL coverage probabilities. Finally, we provide multiple useful
guidelines for future HAPS deployment.
\end{abstract}
% \vspace{-0.7cm}
\begin{IEEEkeywords}
% \vspace{-0.3cm}
High altitude platform stations, stochastic geometry, non-terrestrial networks, hard-to-reach areas, coverage enhancement, downlink and uplink analysis
\end{IEEEkeywords}

\section{Introduction} \label{sec:Intro}
The International Telecommunication Union (ITU) defines high altitude platform stations (HAPSs) as radio stations located at altitudes of 20-50 kilometers \cite{HAPSITU}. Operating in the stratosphere, they can offer lower propagation latency and maintenance costs than satellite networks and can provide communication services over a wider geographical area than uncrewed aerial vehicles (UAVs) \cite{9380673,10355104}. To provide broadband access services to ubiquitous unmodified user devices, HAPSs can share the same spectrum with terrestrial networks (TNs) \cite{10082988}. Also, HAPSs can be launched rapidly to hotspots or post-disaster areas to provide stable support for emergency communications, and operate for many weeks or months using buoyancy and aerodynamics \cite{9900369}. The high payload capacity of HAPSs can support large antenna arrays, so they can realize directional beamforming for both ground and air users \cite{9380673,10930564}. In particular, massive multiple-input multiple-output (MIMO) and stacked intelligent metasurfaces (SIMs) can serve as promising physical-layer enablers for advanced beamforming in HAPS communications \cite{11219215,11456560}. By providing highly directional and adaptive beams, these techniques can enhance full-space connectivity for the mobile users and aerial platforms \cite{11474787}.\\
\indent HAPSs are integrated into International Mobile Telecommunications (IMT) networks, known as `HAPS as IMT base stations' (HIBS) \cite{9900369,9681623}. They can be used to bridge the digital divide and fill coverage gaps in TNs \cite{10474118,10504115}. In many places in the world, such as the Amazon rainforest, Mediterranean islands, deserts, and remote mountain regions, TNs are severely limited by geographical conditions and terrain. Terrestrial base station (TBS) deployment is close to transportation lines or coastlines, while vast areas still lack reliable communication coverage. Although these areas do not have a large residential population, they still contain many potential user equipment (UE) clusters, such as the photovoltaic energy industry, agricultural Internet of Things (IoT), desert greening projects, rainforest fire early warning systems, offshore fishing networks, and oil extraction networks. Therefore, HAPSs can provide stable communication support for these important industries. For areas where terrestrial communication systems are destroyed by natural disasters such as earthquakes, floods, or wildfires, HAPS networks can serve as an effective temporary supplement for quick recovery \cite{11387886}.\\
\indent It is important to evaluate the performance of HAPS networks in filling the coverage hole in the hard-to-reach areas, considering the coexistence with spatially constrained TNs and directional beamforming ability of HAPSs. Stochastic geometry has been widely used in the performance analysis of wireless systems, without losing tractability and accuracy compared to real-world networks
\cite{haenggi2013stochastic}. Therefore, to bridge the coverage gaps in the hard-to-reach areas, we construct a stochastic geometry-based framework considering HAPSs with directional antennas and provide the performance analysis in this paper.

\subsection{Related Work}
In this paper, we focus on bridging coverage holes in hard-to-reach areas with limited terrestrial networks, and aim to analyze the uplink (UL) and downlink (DL) transmission performance of HAPS networks considering directional beamforming. Therefore, we divide the relevant work into: \textit{i) \textbf{HAPSs for Wireless Networks}} and \textit{ii) \textbf{Stochastic Geometry for HAPS Networks}}. 

\textbf{\textit{HAPSs for Wireless Networks}}: In the 6G era, the integration of HAPSs introduces an additional degree of freedom to reduce energy consumption. For example, \cite{10571153,10304250} proved that HAPSs can work as a promising solution to realize energy efficient wireless networks. Authors in \cite{10608095} proposed a cylindrical antenna for HAPS that utilizes vertical linear array sectors. They also proposed a non-orthogonal multiple access clustering method to address the high spatial correlation among users and developed an algorithm to maximize spectral efficiency and energy efficiency while meeting quality of service. Authors in \cite{10949721} formulated an optimization problem to maximize the total capacity of ground UEs, and showed that lower altitudes are preferred when the channel conditions deteriorate. In \cite{11363370}, authors proposed a reconfigurable intelligent surface (RIS)-assisted rate splitting multiple access system of HAPSs, to maximize the sum rate by jointly optimizing precoding, RIS beamforming and time allocation under secrecy rate constraints. In addition, stacked intelligent metasurfaces can be applied to enable advanced wave-domain transceiver design \cite{10900449}. For instance, authors in \cite{11456560} derived closed-form expressions for the outage probability in SIM-assisted HAPS networks and proposed an alternating optimization framework and an unsupervised deep neural network for energy efficiency optimization. In \cite{9773096}, authors showed that integrating HAPSs with TNs can help offload connection tasks, reduce energy consumption of TBSs and make the whole network more sustainable. Therefore, HAPSs can complement terrestrial networks in supporting massive dynamic and unpredictable traffic demands in urban areas \cite{10192297}. Authors in \cite{gautam2026reliability} focused on the 5G New Radio (NR) communication system with HAPSs, low Earth orbit (LEO) satellites and TNs, and evaluated the reliability function and mean time to failure. They studied the cases with repairable or non-repairable k-out-of-n models for the HAPS networks. The integration of space networks, air networks and ground networks has been proved to be effective. Authors in \cite{9869801} took the maritime networks into account and analyzed the challenges and difficulties in realizing global coverage. The interference problem is the bottleneck for the operation of integrated networks. Authors in \cite{9904854} designed a joint power-subcarrier allocation scheme, and developed a rapidly converging iterative algorithm to solve the max-min fairness optimization problem of all UEs.

By integrating radio frequency and optical communication components, HAPSs can form an integrated network across air, space, and ground, providing high-speed services to remote areas. Authors in \cite{10599115} presented the complex networking challenges faced by the space-air-ground integrated network (SAGIN) and learning-based solutions. In \cite{mahmoud2025synthesize}, authors discussed the integration of HAPSs with deep learning models. They integrated the deep neural network (DNN) with a modified gravitational search algorithm and particle swarm optimization algorithm to effectively cover diverse road paths in traversing challenging terrains like deserts, mountain chains, and forested areas. By integrating deep reinforcement learning (DRL) and convex optimization, authors in \cite{11083742} jointly optimized dynamic caching strategies and resource allocation across the whole HAPS network, and provided a practical solution to enhance rural connectivity and bridge the digital divide. Through a multi-agent deep Q learning method, authors in \cite{s22041630} addressed the impact of HAPS transmission power on terrestrial networks when sharing the same spectrum. They minimized the outage probability of downlink transmission when satisfying the interference constraint. In \cite{10016705}, authors discussed in detail the applications of reinforcement learning (RL) in cellular-connected non-terrestrial networks (NTNs) and NTN-aided wireless communications and surveyed the literature for different RL formulations to solve NTN control problems.

In addition to working as super macro base stations (SMBSs), HAPSs can also act as multimode aerial nodes, including sensing stations, computation nodes, relays, and RIS platforms in the sky \cite{10186454}. For instance, authors in \cite{11429523} introduced an application of integrated sensing and communication systems. They jointly optimized the HAPS deployment strategy together with transmit beamforming, to simultaneously deliver communication services to multiple users and conduct synthetic aperture radar imaging for ground targets. HAPSs can also provide backhaul service for low altitude platforms (LAPs) in post-disaster areas. Authors in \cite{10097717} showed that, when the malfunction area exceeds a certain bound, it is necessary to deploy a HAPS to assist the backhaul of LAPs and provide better services for ground UEs. In \cite{11006480}, authors proposed a hierarchical aerial multi-access edge computing (MEC) system, where a HAPS provides stable and strong computing services for multi-UAVs to connect the remote ground users. Authors in \cite{9714482} constructed a hierarchical aerial computing framework of HAPSs and UAVs to provide MEC services for various IoT devices, and proposed effective algorithms for offloading decisions. In vehicular networks, HAPSs together with intelligent connected vehicles and roadside units can provide smooth distributed computing and avoid interruptions caused by handoffs \cite{10103832}. HAPSs can help reduce the delay of the system by optimizing the computation offloading, caching decisions and resource allocations \cite{9772280}. In \cite{10254352,abderrahim2025green}, authors proposed a flying data center by HAPS to support the operation of terrestrial data centers. They demonstrated that a data center-enabled HAPS can meet the growing energy demands and reduce carbon footprint due to the naturally low temperature in the stratosphere and the ability to harvest solar energy.

\textbf{\textit{Stochastic Geometry for HAPS Networks}}: Stochastic geometry has been widely used to evaluate the performance of wireless networks, including TNs, NTNs and integrated systems \cite{shi2024stochastic}. In \cite{10634042}, authors studied the system-level metrics of NTNs, where HAPSs play an important role in future networks to obtain better line-of-sight (LoS) conditions and higher availability. They discussed the coverage-based, relay-based and routing-based metrics and showed how HAPS deployment affects the coverage performance and energy efficiency of the entire NTNs. In \cite{10082988}, authors showed that low height and suitable deployment density can improve the coverage probability and transmission capacity of a HAPS network. Authors in \cite{11363148} emphasized the importance to apply a HAPS constellation in islands and maritime areas to enhance the connectivity. They proposed using the terrain information to divide the HAPSs into two regions with different LoS conditions and build a unified framework to evaluate the downlink coverage probability for onshore, nearshore and offshore UEs. The authors in \cite{11389817} proposed deploying an aerodynamic HAPS over the urban area, operating along a fixed circular trajectory, and the UEs calling to the HAPS are modeled as a hard-core binomial point process (BPP). They considered the antenna radiation patterns and the space division multiple access (SDMA) scheme, and calculated the long-term downlink coverage probability and ergodic capacity. In \cite{10930564}, authors modeled the global HAPS network as a spherical Poisson point process (PPP) and derived the expressions for uplink and downlink coverage probability considering the directional beamforming.

Stochastic geometry is also important to evaluate the feasibility of HAPSs in future TN-NTN integrated systems. In \cite{10506977}, authors highlighted that the HAPSs and LEO satellites are favorable when the cost is constrained and underlined their efficacy in balancing cost and performance in network design. Authors in \cite{11251052} conducted the downlink coverage analysis for a heterogeneous NTN including uncrewed aerial vehicles (UAVs), HAPSs and LEO satellites. They also evaluated the relationship between the expected delay and the densities of different NTN nodes. In \cite{10341311}, authors proposed a multi-layer heterogeneous SAGIN network considering multi-band Terahertz/Radio Frequency (THz/RF) channel models, derived the joint signal-to-interference-plus-noise ratio (SINR) and rate coverage probability and found the optimal deployment scheme for SAGIN. Authors in \cite{10040542} considered the maritime applications and evaluated the coverage performance of surface stations through the cooperation between onshore stations, tethered balloons, HAPSs and satellites. HAPSs can also work as decode-and-forward (DF) relays between LEO satellites and ground users. Authors in \cite{11106495} modeled the HAPSs as a PPP and analyzed the reliability performance of the satellite-air-terrestrial system. They derived the approximate closed-form expression for the outage probability based on orthogonal time frequency space (OTFS) technique. 

In \cite{11363148}, we investigated the feasibility of using HAPSs to cover open ocean areas without terrestrial infrastructure, and discussed the quality-of-service fairness among ubiquitous users under different channel environments. In \cite{11052265}, we discussed how HAPSs and gateways can form a large-scale continuous service area for IoT devices and mobile users. However, unlike vast maritime areas such as the Pacific Ocean, some inland or nearshore hard-to-reach areas are uncovered due to geographic and environmental constraints, where terrestrial base stations can only be deployed around their boundaries. In this manuscript, we further investigate the feasibility of HAPSs with the ITU radiation pattern to cover these blind areas and characterize the association and mutual interference between HAPSs and spatially constrained TBSs in realistic hard-to-reach areas.

\subsection{Contributions}
The contributions of this paper can be summarized as follows:
\begin{itemize}
    \item We build a mathematical framework for the HAPS-based solutions in hard-to-reach areas, addressing the impact of directional beamforming of HAPSs and coexistence of a HAPS network and terrain-constrained TNs.
    \item We investigate the downlink and uplink performance of HAPS networks in hard-to-reach areas and verify the feasibility of using HAPSs to bridge the digital divide. We also quantitatively analyze the effect of the number of HAPSs, 3 dB beamwidth and HAPS altitude on the DL and UL coverage performance.
    \item We provide design guidelines for the HAPS-based solution in hard-to-reach areas. Our results reveal how the 3 dB beamwidth of directional antennas and the number of HAPSs should be jointly designed considering both DL and UL performance.
\end{itemize}

\section{System Model} \label{sec:SysMod}

\begin{figure*}
    \centering
    \includegraphics[width=0.6\linewidth]{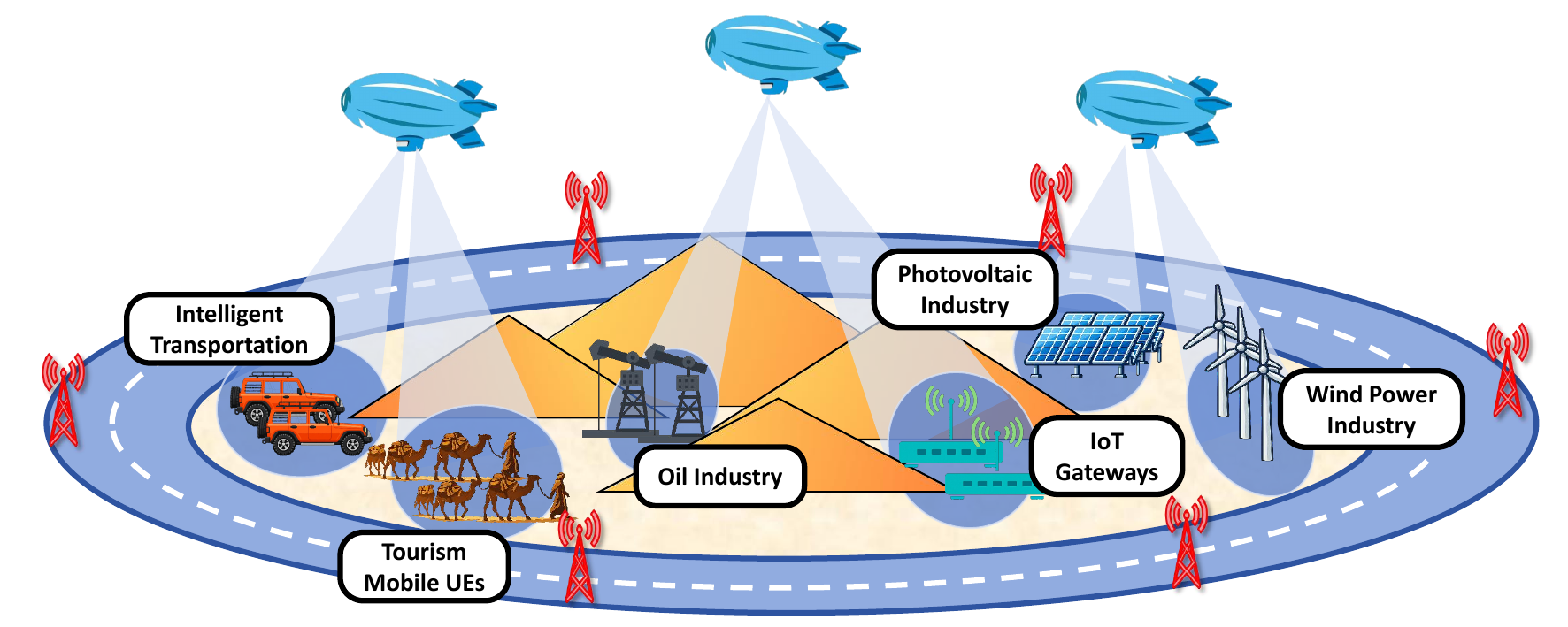}
    \caption{Illustration of the HAPS-based solution for a hard-to-reach area, taking desert applications as an example.}
    \label{fig:SystemModel}
\end{figure*}

\begin{table*}[t]\caption{Table of Notations}
\centering
\begin{center}
\resizebox{\textwidth}{!}{
\renewcommand{\arraystretch}{1}%1.4
    \begin{tabular}{ {c} | {l} }
    \hline
        \hline
    \textbf{Notation} & \textbf{Description} \\ \hline
    $\textbf{o}_T$; $\textbf{o}_H$; $\textbf{o}_U$ & The center of TBS deployment area; the center of HAPS deployment area and the center of the hard-to-reach area.\\ \hline
    $r_{\rm in}$; $r_{\rm out}$ & The radius of the hard-to-reach area; the maximum distance from the TBSs to the center of the hard-to-reach area.\\ \hline
    $\Phi_T$; $\Phi_H$; $\Phi_U$ & The sets of locations of TBSs, HAPSs, and grant-free interfering UEs active on the considered uplink resource block, respectively.\\ \hline
    $\lambda_T$; $N_H$; $\lambda_U$ & The density of TBSs; the number of HAPSs; the density of grant-free interfering UEs active on the considered uplink resource block.\\ \hline
    $h_T$; $h_H$ & The altitude of TBSs; the altitude of HAPSs.\\ \hline
    $G_m$; $L_N$; $L_F$ & The maximum main lobe gain in $\rm dB$; the near-in-side-lobe level; the far-side-lobe level.\\ \hline
    $p_T^{\rm Tx}$; $p_{H}^{\rm Tx}$; $p_{U}^{\rm Tx}$ & The transmit power of each TBS, HAPS or UE.\\ \hline
    $f_c$; $c$; $N_0$; $B$ & The carrier frequency; the speed of light; the thermal noise power; the bandwidth.\\ \hline
    $W_H^{k}$, $W_T^j$ & The DL channel gain between UE located at $\textbf{z}_{U}^{0}$ and HAPS located at $\textbf{z}_H^{k}$; the DL channel gain between UE located at $\textbf{z}_{U}^{0}$ and TBS located at $\textbf{z}_T^{j}$.\\ \hline
    $W_{U}^{0}$; $W_{U}^{i}$ & The UL channel gain between UE located at $\textbf{z}_U^0$ and its serving HAPS located at $\textbf{z}_H^1$; the UL channel gain between UE located at $\textbf{z}_U^i$ and the HAPS located at $\textbf{z}_H^1$.\\ \hline
    $m_H$; $m_T$ & The shape parameter of the HAPS-UE channels; the shape parameter of the TBS-UE channels.\\ \hline
    $\alpha_H$; $\alpha_T$ & The path loss exponent of HAPS-UE channels; the path loss exponent of TBS-UE channels.\\ \hline
    $\psi_{\rm DL}^{k}$ & The off-boresight angle of the UE located at $\textbf{z}_U^0$ regarding the beam direction of interfering HAPS $\textbf{z}_H^k$.\\ \hline
    $\psi_{\rm UL}^{i}$ & The off-boresight angle of the interfering UE located at $\textbf{z}_U^i$ regarding the beam between the UE $\textbf{z}_U^0$ and its serving HAPS $\textbf{z}_H^1$.\\ \hline
    $\psi_b$ & The half of $3\,{\rm dB}$ beamwidth.\\ \hline
    $\tau_{\rm DL}$; $\tau_{\rm UL}$ & The SINR decoding threshold of downlink services; the SINR decoding threshold of uplink services.\\ \hline
    $\textbf{b}(\textbf{o}_b,r_b)$ & A circular area centered at $\textbf{o}_b$ with radius $r_b$.\\ \hline
    $\textbf{a}(\textbf{o}_a,r_{a,1},r_{a,2})$ & An annular area centered at $\textbf{o}_a$ with inner radius $r_{a,1}$ and outer radius $r_{a,2}$.\\ \hline

     \hline
    \end{tabular}
    }
\end{center}
\label{tab:TableOfNotations}
%\vspace{-8mm}
\end{table*}

Inspired by realistic cases with terrain-constrained TNs, such as the Eastern desert in Egypt, the Mediterranean, and Borneo, we assume that TBSs are deployed in an annular area $\textbf{a}(\textbf{o}_T,r_{\rm in},r_{\rm out})$ centered at $\textbf{o}_T=(0,0,h_T)$ with an inner radius $r_{\rm in}$ and an outer radius $r_{\rm out}$, along rivers, transportation backbones, or the coastlines. This abstraction of the hard-to-reach area preserves both the analytical tractability of distance distributions and coverage performance, while retaining representative features of real-world scenarios \cite{10050345}. Note that we adopt polar coordinates in this paper. The locations of TBSs are assumed to follow a PPP $\Phi_T=\{\textbf{z}_T^j\}$ with density $\lambda_T$, where $\textbf{z}_T^j$ represents the location of the $j$th TBS. To fill the coverage hole inside this circular area, we propose to deploy $N_H$ HAPSs in the circular area $\textbf{b}(\textbf{o}_H,r_{\rm in})$ centered at $\textbf{o}_H=(0,0,h_H)$ with radius $r_{\rm in}$. We assume that the locations of HAPSs follow a BPP $\Phi_H=\{\textbf{z}_H^k\}$, where $\textbf{z}_H^k$ represents the location of the $k$th HAPS. The locations of interfering grant-free active UEs transmitting on the same uplink resource block are modeled as a PPP $\Phi_U=\{\textbf{z}_U^{i}\}$ with a density $\lambda_U$ in the hard-to-reach area $\textbf{b}(\textbf{o}_U,r_{\rm in})$ where $\textbf{o}_U=(0,0,0)$, while a scheduled (grant-based) typical UE is located at $\textbf{z}_U^0=(r_U,\theta_U,0)$.

To better capture the characteristics of real-world beamforming antenna gain of HAPS communications, we adopt the ITU antenna pattern introduced in \cite{10753096,9178753,recommendation2000minimum}. We let $\psi$ denote the off-boresight angle in radians and let $\psi_d=(180/\pi)\psi$ denote the corresponding angle in degrees. The antenna gain in dB at the off-boresight angle $\psi$ is expressed as:
\begin{equation}
    G_{\rm ITU}^{\rm dB}(\psi)=\left\{
\begin{array}{r@{}lc}
& G_m - 3(\frac{\psi_d}{\psi_b})^2, & 0^\circ\leq \psi_d\leq \psi_1\\
& G_m + L_N, &\psi_1<\psi_d\leq\psi_2\\
& G_m+L_N-60\log_{10} (\frac{\psi_d}{\psi_2}),& \psi_2<\psi_d\leq \psi_3\\
& L_F, &\psi_d>\psi_3,
\end{array}
    \right.
\end{equation}
where $G_m$ is the maximum gain in the main lobe in dB, $\psi_b$ is one-half the 3 {\rm dB} beamwidth, $L_N$ is the near-in-side-lobe level, $L_F=G_m-73\,{\rm dBi}$ is the far-side-lobe level. Other parameters can be calculated using:
\begin{equation}
    \psi_1=\psi_b\sqrt{-L_N/3},\,\psi_2=3.745\psi_b,
\end{equation}
% \begin{equation}
%     \psi_2=3.745\psi_b,
% \end{equation}
% \begin{equation}
%     X=G_m+L_N+60\log (\psi_2)
% \end{equation}
and
\begin{equation}
    \psi_3=\psi_2 10^{\frac{G_m+L_N-L_F}{60}}.
\end{equation}
The relationship between the 3 dB beamwidth and the maximum main lobe gain is:
\begin{equation}
    G_m=10\log_{10}(\frac{80^2}{(2\psi_b)^2})+6.
\end{equation}

Using the ITU HAPS antenna pattern, we introduce the signal propagation model for downlink and uplink transmissions of the considered HAPS-based solution in hard-to-reach areas.

\subsection{Downlink Transmission}
In downlink analysis, we assume that the transmitted power of each HAPS is $p_H^{\rm Tx}$ and the transmitted power of each TBS is $p_T^{\rm Tx}$. The transmit gains of each HAPS and each TBS are $G_H^{\rm Tx}$ and $G_T^{\rm Tx}$, respectively, and the receiver gain is $G_U^{\rm Rx}$. The carrier frequency is $f_c$ with wavelength $\lambda$, and $c$ represents the speed of light. We denote $s_A$ as a constant factor accounting for atmospheric absorption and environmental attenuation.  Therefore, the average received signal power at a unit distance transmitted by HAPS is:
\begin{equation}
    p_{H}=p_H^{\rm Tx}G_H^{\rm Tx}G_U^{\rm Rx}s_A(\frac{c}{4\pi f_c})^2 ,
\end{equation}
and the average received signal power at a unit distance transmitted by TBS is
\begin{equation}
    p_T=p_T^{\rm Tx}G_T^{\rm Tx}G_U^{\rm Rx}s_A(\frac{c}{4\pi f_c})^2.
\end{equation}
The channel gains between the HAPSs and the UE are denoted using $\{W_H^k\}$, and the channel gains between the TBSs and the UE are denoted as $\{W_T^j\}$. The signal amplitudes follow the Nakagami-m distribution with parameters $m_H$ and $m_T$, respectively, so that the channel gains follow the Gamma distribution with the corresponding parameters and unit expectations \cite{11288862}. As commonly adopted in large-scale network analysis for analytical tractability, we model TBSs with omnidirectional antennas and focus on comparing the cases where HAPSs adopt either omnidirectional or directional antennas. This assumption provides a tractable baseline for evaluating the impact of HAPS directional beamforming. Without considering the effect of directional beamforming and using the omnidirectional antennas assumption, the signal power transmitted from the HAPS located at $\textbf{z}_H^k$ and received by the UE $\textbf{z}_U^0$ is $p_H W_H^k {d_H^k}^{-\alpha_H}=p_H W_H^k\|\textbf{z}_U^0-\textbf{z}_H^k\|^{-\alpha_H}$. Also, the signal power transmitted from the TBS located at $\textbf{z}_T^j$ and received by the UE $\textbf{z}_U^0$ is $p_T W_T^j {d_T^j}^{-\alpha_T}=p_T W_T^j\|\textbf{z}_U^0-\textbf{z}_T^j\|^{-\alpha_T}$, where $\alpha_H$ and $\alpha_T$ represent the path loss exponent for HAPSs and TBSs, respectively. 

Further considering the ITU HAPS antenna pattern, we define $\overrightarrow{\textbf{z}_H^k\textbf{z}_U^0}$ as the vector from an interfering HAPS located at $\textbf{z}_H^{k}$ to the UE located at $\textbf{z}_U^0$, and the off-boresight angle between the beam direction of HAPS $\textbf{z}_H^{k}$ and $\overrightarrow{\textbf{z}_H^k\textbf{z}_U^0}$ is $\psi_{\rm DL}^{k}$. For tractability, the off-boresight angles of interfering HAPS beams with respect to the considered UE are modeled as uniformly distributed over $[0,\pi]$. The antenna gain compared to the maximum main lobe gain can be calculated using $G_{\rm DL}^{k}=10^{\frac{G_{\rm ITU}^{\rm dB}(\psi_{\rm DL}^{k})-G_{\rm ITU}^{\rm dB}(0)}{10}}$. When the UE is associated with the closest HAPS located at $\textbf{z}_H^{1}$, $\psi_{\rm DL}^{1}=0$ and the relative gain is $G_{\rm DL}^{1}=1$. According to this, the downlink SINR is formulated as:
\begin{equation}
\begin{array}{r@{}l}
    &{\rm SINR}_{\rm DL}^{H}\\
    &=\frac{\displaystyle p_HW_H^1 {d_H^1}^{-\alpha_H}}{\displaystyle\sum_{k\neq 1}p_H W_H^k G_{\rm DL}^{k} {d_H^k}^{-\alpha_H}+\sum_{j}p_T  W_T^j {d_T^j}^{-\alpha_T}+N_0}.
\end{array}
\end{equation}

Also, when the UE is connected to the closest TBS located at $\textbf{z}_T^{1}$, the SINR is formulated as:
\begin{equation}
\begin{array}{r@{}l}
    &{\rm SINR}_{\rm DL}^T\\
    &=\frac{\displaystyle p_TW_T^1 {d_T^1}^{-\alpha_T}}{\displaystyle\sum_{k}p_H W_H^k G_{\rm DL}^{k} {d_H^k}^{-\alpha_H}+\sum_{j\neq 1}p_T W_T^j {d_T^j}^{-\alpha_T}+N_0}.
\end{array}
\label{SINRDLB}
\end{equation}

We assume that the UE connects to the access point (AP) with the strongest average power, which is widely adopted in system-level analysis addressing long-term stable accesses. Specifically, for the considered UE, we assume that the serving beam of its associated HAPS is aligned with the UE during the corresponding service time slot, so that the serving HAPS link can obtain the main-lobe antenna gain. Under this HAPS beam-alignment assumption, the considered UE can be associated with its nearest HAPS or TBS, since the desired average signal strength is mainly dominated by the propagation distance. Let $Q$ denote the associated tier, which satisfies:
\begin{equation}
    Q=\left\{\begin{array}{@{}ll}
        T, & p_H {d_H^1}^{-\alpha_H}\leq p_T {d_T^1}^{-\alpha_T},\\
        H, & p_H {d_H^1}^{-\alpha_H}> p_T {d_T^1}^{-\alpha_T}.
    \end{array}\right.
\end{equation}
Therefore, the downlink coverage probability of the UE located at $\textbf{z}_U^0$ can be formulated as:
\begin{equation}
\begin{array}{r@{}l}
    P_{\rm DL}(\tau_{\rm DL})&=\mathbb{P}\{ {\rm SINR}_{\rm DL}^T>\tau_{\rm DL},Q=T\}\\
    &\;\;\;\;\;\;\;\;\;\;\;\;\;\;+\mathbb{P}\{{\rm SINR}_{\rm DL}^{H}>\tau_{\rm DL},Q=H\}.
\end{array}
\label{PDL_tauDL_fomulation}
\end{equation}

\subsection{Uplink Transmission}
In the hard-to-reach areas, HAPSs can provide not only downlink services, but also uplink services. We assume that the uplink services operate in a different frequency band from the downlink services. For instance, IoT gateways (GWs) collect the information from neighboring IoT devices, and then upload it to HAPSs. Mobile users can transmit information to HAPSs through direct ground-to-air (G2A) links. In this paper, we focus on the uplink performance of a scheduled UE associated with its closest HAPS as a baseline case for direct user-to-HAPS transmissions. The association between the typical UE and its serving HAPS is assumed to have been established before the uplink transmission. We further assume that the serving HAPS has acquired the direction of the scheduled UE during the scheduling procedure. Hence, during the corresponding uplink service time slot, the serving HAPS aligns its receive beam toward the typical UE. Meanwhile, other grant-free active UEs independently transmit on the same resource block. Their signals are received according to the HAPS receive antenna pattern with the corresponding off-boresight gains and are treated as co-channel interference. We model the locations of these grant-free interfering UEs by a PPP with a density $\lambda_U$. The transmitted powers of the scheduled typical UE and grant-free interfering UEs are assumed to be $p_{U}^{\rm Tx}$. The transmit gain is $G_{U}^{\rm Tx}$ and the receiver gain of the HAPS is $G_{H}^{\rm Rx}$. Therefore, without considering the directional beamforming, the signal power transmitted by the UE and received at a unit distance by the HAPS is:
\begin{equation}
    p_{U}=p_U^{\rm Tx}G_U^{\rm Tx}G_H^{\rm Rx}s_A(\frac{c}{4\pi f_c})^2.
\end{equation}

We denote the channel gain between the serving HAPS and the interfering UE located at $\textbf{z}_U^i$ as $\{W_U^i\}$. The signal amplitudes also follow the Nakagami-m distribution with parameter $m_H$, and the channel gains follow the Gamma distribution with parameter $m_H$ and unit expectations. In the uplink procedure, we assume that the typical UE located at $\textbf{z}_U^0$ is served by its closest HAPS located at $\textbf{z}_H^{1}$ with the smallest path loss among HAPSs. We adopt the ITU HAPS antenna pattern and assume that the serving beam of this HAPS is aligned with the typical UE during the corresponding service time slot. We define the beam direction toward the typical UE as $\overrightarrow{\textbf{z}_H^1\textbf{z}_U^{0}}$. Because the uplink services operate in different frequency bands from the downlink services, the uplink signal from the considered UE is interfered by grant-free UEs transmitting on the same resource block. We define $\overrightarrow{\textbf{z}_H^1\textbf{z}_U^{i}}$ as the vector from the HAPS located at $\textbf{z}_H^{1}$ and any interfering UE located at $\textbf{z}_U^i$. The angle between the beam direction of HAPS $\textbf{z}_H^{1}$ and $\overrightarrow{\textbf{z}_H^1\textbf{z}_U^{i}}$ is $\psi_{\rm UL}^{i}$, which is also called the off-boresight angle to the main lobe direction of the considered HAPS. We assume that the antenna gain compared to the maximum main lobe gain is $G_{\rm UL}^{i}=10^{\frac{G_{\rm ITU}^{\rm dB}(\psi_{\rm UL}^{i})-G_{\rm ITU}^{\rm dB}(0)}{10}}$. Therefore, the signal power transmitted from UE located at $\textbf{z}_U^{i}$ and received by the HAPS located at $\textbf{z}_H^{1}$ is
$p_U W_U^i G_{\rm UL}^{i}{d_U^i}^{-\alpha_H}=p_U W_U^i G_{\rm UL}^{i}\|\textbf{z}_U^i-\textbf{z}_H^1\|^{-\alpha_H}$. Based on this, the uplink SINR is formulated as:

\begin{equation}
    {\rm SINR}_{\rm UL}^{H}=\frac{p_U W_U^0 {d_U^0}^{-\alpha_H}}{\displaystyle\sum_{\textbf{z}_U^i\in\Phi_U}p_U W_U^i G_{\rm UL}^{i}{d_U^i}^{-\alpha_H}+N_0}.
\end{equation}
Therefore, the uplink coverage probability of the UE located at $\textbf{z}_U^0$ can be formulated as:
\begin{equation}
    P_{\rm UL}^H(\tau_{\rm UL})=\mathbb{P}\{{\rm SINR}_{\rm UL}^{H}>\tau_{\rm UL},Q=H\}.
\end{equation}
However, to obtain the expressions of downlink coverage probability and uplink coverage probability, we start by deriving the distance distributions in the following section. The definitions of main notations used in this paper are summarized in Table \ref{tab:TableOfNotations}.

\section{Distance Distributions}
In this paper, we focus on the performance analysis of a HAPS network in hard-to-reach areas. In the downlink case, the UE associates with its closest HAPS or its closest TBS, which provides the strongest average received power. In the uplink case, we focus on the decoding performance of HAPSs where the UE is associated with its closest HAPS. Therefore, it is necessary to first derive the distributions of the distances from the considered UE to the HAPSs and TBSs. 
\subsection{Distance to HAPSs}
We assume that the location of the typical UE is $\textbf{z}_U^0=(r_U,\theta_U,0)$, with $\theta_U=0$ and $r_U\leq r_{\rm in}$. The minimum distance between the UE and HAPSs is 
\begin{equation}
\begin{array}{r@{}l}
    d_{H}^{\rm min}&=h_H,
\end{array}
\end{equation}
and the maximum distance between the UE and HAPSs is
\begin{equation}
    d_{H}^{\rm max}=\sqrt{h_H^2+(r_U+r_{\rm in})^2},
\end{equation}
where $\mathbbm{1}(A)$ denotes the indicator function, which equals 1 if event $A$ occurs and 0 otherwise. We define the distance between the UE and its closest HAPS as $D_H$, and the projection of $\textbf{z}_U^0$ on the HAPS plane is $\textbf{z}_U^{H}=(r_U,0,h_H)$. The area of the entire HAPS deployment region $\textbf{b}(\textbf{o}_H,r_{\rm in})$ is $\mathcal{S}_H=|\textbf{b}(\textbf{o}_H,r_{\rm in})|=\pi r_{\rm in}^2$. Let $\mathcal{B}_{H}(d_H)$ denote the area of the HAPS deployment region lying within a distance $d_{H}$ to the considered UE, which is the intersection of two circular areas $\textbf{b}(\textbf{o}_H,r_{\rm in})$ and $\textbf{b}(\textbf{z}_U^H,\sqrt{d_H^2-h_H^2})$. Therefore, $\mathcal{B}_{H}(d_H)=|\textbf{b}(\textbf{o}_H,r_{\rm in})\cap \textbf{b}(\textbf{z}_U^H,\sqrt{d_H^2-h_H^2})|$. Based on this, we introduce the distribution of the distance from the UE to its closest HAPS in Lemma \ref{lem:d_H}.

\begin{lemma}
    The complementary cumulative distribution function (CCDF) of $D_H$ is expressed as:
    \begin{equation}
        \overline{F}_{D_H}(d_H)=\bigg(1-\frac{\mathcal{B}_H(d_H)}{\mathcal{S}_H}\bigg)^{N_H},
    \label{FDHdH}
    \end{equation}
    and the probability density function (PDF) of $D_H$ is expressed as:
    \begin{equation}
        f_{D_H}(d_H)=2N_H\frac{\theta_H(d_H)}{\mathcal{S}_H} d_H\bigg(1-\frac{\mathcal{B}_H(d_H)}{\mathcal{S}_H}\bigg)^{N_H-1},
    \end{equation}
    where 
    \begin{equation}
    \begin{array}{r@{}l}
        \mathcal{B}_H(d_H)&=\theta_H(d_H)(d_H^2-h_H^2)\\
        &\;\;\;\;\;\;\;\;+\phi_H(d_H)r_{\rm in}^2-\sin \phi_H(d_H)r_{\rm in}r_U,
    \end{array}
    \end{equation}
    with 
    \begin{equation}
        \theta_H(d_H)=\left\{
        \begin{array}{l}
        \pi, \;\;r_{\rm in}> r_U, 0<\sqrt{d_H^2-h_H^2}\leq r_{\rm in}- r_U\\
        \arccos \frac{r_U^2+d_H^2-h_H^2-r_{\rm in}^2}{2r_U \sqrt{d_H^2-h_H^2}}, \\
        \;\;\;\|r_U-r_{\rm in}\|\leq \sqrt{d_H^2-h_H^2} \leq\|r_U+r_{\rm in}\|  \\
        0, \;\;{\rm otherwise},
        \end{array}
        \right.  
        \label{thetaH}
    \end{equation}
    \begin{equation}
        \phi_H(d_H)=\left\{
        \begin{array}{l}
        \pi, \;\;\sqrt{d_H^2-h_H^2}>r_U+r_{\rm in}\\
        \arccos\frac{r_U^2+r_{\rm in}^2-d_H^2+h_H^2}{2r_U r_{\rm in}}, \\
        \;\; \|r_U-r_{\rm in}\|\leq \sqrt{d_H^2-h_H^2} \leq\|r_U+r_{\rm in}\|\\
        0, \;\;{\rm otherwise}.
        \end{array}
        \right.
        \label{phiH}
    \end{equation}
\label{lem:d_H}
\end{lemma}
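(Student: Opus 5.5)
The CCDF follows from the binomial structure of $\Phi_H$. The $N_H$ HAPSs are i.i.d.\ and uniformly distributed over the disk $\textbf{b}(\textbf{o}_H,r_{\rm in})$ at altitude $h_H$. The event $\{D_H>d_H\}$ means that no HAPS lies within distance $d_H$ of $\textbf{z}_U^0$. Since the HAPS plane is at height $h_H$, a HAPS at horizontal position $\textbf{x}$ is within distance $d_H$ exactly when $\|\textbf{x}-\textbf{z}_U^H\|\le\sqrt{d_H^2-h_H^2}$. So the set of positions of a HAPS that is ``too close'' is $\textbf{b}(\textbf{o}_H,r_{\rm in})\cap\textbf{b}(\textbf{z}_U^H,\sqrt{d_H^2-h_H^2})$, whose area is $\mathcal{B}_H(d_H)$. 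Each HAPS independently avoids this set with probability $1-\mathcal{B}_H(d_H)/\mathcal{S}_H$, and independence gives (\ref{FDHdH}). The PDF then follows from $f_{D_H}=-\frac{d}{dd_H}\overline{F}_{D_H}$, that is,
\begin{equation*}
f_{D_H}(d_H)=\frac{N_H}{\mathcal{S}_H}\,\mathcal{B}_H'(d_H)\Big(1-\frac{\mathcal{B}_H(d_H)}{\mathcal{S}_H}\Big)^{N_H-1}.
\end{equation*}
It therefore remains to compute $\mathcal{B}_H$ and show that $\mathcal{B}_H'(d_H)=2\theta_H(d_H)d_H$.

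For the lens area I use the standard intersection of two circles. Write $\rho=\sqrt{d_H^2-h_H^2}$; the circles have radii $\rho$ (centered at $\textbf{z}_U^H$) and $r_{\rm in}$ (centered at $\textbf{o}_H$), with center separation $r_U$. In the overlapping regime $|r_U-r_{\rm in}|\le\rho\le r_U+r_{\rm in}$, let $\theta_H$ be the half-angle subtended at $\textbf{z}_U^H$ by the intersection chord and $\phi_H$ the half-angle at $\textbf{o}_H$. The law of cosines in the triangle formed by the two centers and an intersection point gives exactly the arccos expressions in (\ref{thetaH}) and (\ref{phiH}). The intersection area is the sum of two circular segments,
\begin{equation*}
\theta_H\rho^2-\tfrac12\rho^2\sin 2\theta_H+\phi_H r_{\rm in}^2-\tfrac12 r_{\rm in}^2\sin 2\phi_H .
\end{equation*}
The two triangle terms combine to twice the area of the triangle with sides $\rho$, $r_{\rm in}$, $r_U$, which equals $r_U r_{\rm in}\sin\phi_H$. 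This yields the stated form of $\mathcal{B}_H$.

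The degenerate cases check out as follows.
\begin{itemize}
\item If $\rho\le r_{\rm in}-r_U$, the small disk lies inside the deployment disk. Then $\theta_H=\pi$ and $\phi_H=0$, so $\mathcal{B}_H=\pi\rho^2$.
\item If $\rho>r_U+r_{\rm in}$, the deployment disk is fully covered. Then $\theta_H=0$ and $\phi_H=\pi$, so $\mathcal{B}_H=\pi r_{\rm in}^2$.
\end{itemize}

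The main obstacle is the derivative identity $\frac{d}{d\rho}\mathcal{B}_H=2\rho\,\theta_H$. Combined with $d\rho/dd_H=d_H/\rho$, it gives $\mathcal{B}_H'(d_H)=2\theta_H d_H$. I would prove it geometrically rather than by brute differentiation. Increasing $\rho$ by $d\rho$ adds a thin annular strip of the circle of radius $\rho$, restricted to the part of that circle lying inside $\textbf{b}(\textbf{o}_H,r_{\rm in})$. That part is an arc of angular extent $2\theta_H$, so the added area is $2\theta_H\rho\,d\rho$. This is a coarea argument, and it also holds in the degenerate cases, giving $2\pi\rho$ and $0$ respectively. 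As a sanity check, differentiating the closed form directly, the $\theta_H'$ and $\phi_H'$ terms cancel because of the chord relation $\rho\sin\theta_H=r_{\rm in}\sin\phi_H$, which confirms the identity. Substituting into the PDF expression gives the stated result.
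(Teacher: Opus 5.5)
Your proposal is correct and follows the same route as the paper. The CCDF is the product of the $N_H$ independent probabilities that each uniformly placed HAPS avoids $\textbf{b}(\textbf{o}_H,r_{\rm in})\cap\textbf{b}(\textbf{z}_U^H,\sqrt{d_H^2-h_H^2})$, and the PDF is obtained by differentiating; your lens-area derivation and the coarea identity $\mathcal{B}_H'(d_H)=2\theta_H(d_H)d_H$ (checked via $\rho\sin\theta_H=r_{\rm in}\sin\phi_H$) fill in geometric details that the paper leaves implicit.
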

\begin{IEEEproof}
     The CCDF of $D_H$ is given by the probability that all HAPSs are located farther than $d_H$ from the UE. Since the locations of the HAPSs are assumed to be independent, $\overline{F}_{D_H}(d_H)$ can be expressed as the product of the individual probabilities. The PDF of $D_H$ can be obtained by differentiating $1-\overline{F}_{D_H}(d_H)$ with respect to $d_H$.
\end{IEEEproof}
Next, we further investigate the distribution of the distance from the UE to its closest TBS. This is necessary for both cases when UEs connect to TBSs or when they
connect to HAPS but are still interfered with by TBSs.

\subsection{Distance to the closest TBS}
Recall that the location of the typical UE is $\textbf{z}_U^0=(r_U,\theta_U,0)$ where $\theta_U=0$. The minimum distance between the UE and TBSs is 
\begin{equation}
    \begin{array}{r@{}l}
    d_T^{\min}&=\sqrt{h_T^2+(r_U-r_{\rm in})^2},
    \end{array}
\end{equation}
and the maximum distance between the UE and TBSs is
\begin{equation}
    d_T^{\rm max}=\sqrt{h_T^2+(r_U+r_{\rm out})^2}.
\end{equation}
We define the distance between the UE and its closest TBS as $D_T$, and the projection of $\textbf{z}_U^0$ on the TBS plane is $\textbf{z}_U^T=(r_U,0,h_T)$. The area of the entire TBS deployment area $\textbf{a}(\textbf{o}_T,r_{\rm in},r_{\rm out})$ is $\mathcal{S}_T=|\textbf{a}(\textbf{o}_T,r_{\rm in},r_{\rm out})|=\pi r_{\rm out}^2-\pi r_{\rm in}^2$. Let $\mathcal{B}_T(d_T)$ denote the area of the TBS deployment region lying within a distance $d_T$ to the considered UE, which is the intersection of the annular area $\textbf{a}(\textbf{o}_T,r_{\rm in},r_{\rm out})$ and the circular area $\textbf{b}(\textbf{z}_U^T,\sqrt{d_T^2-h_T^2})$. Therefore, $\mathcal{B}_T(d_T)=|\textbf{a}(\textbf{o}_T,r_{\rm in},r_{\rm out})\cap \textbf{b}(\textbf{z}_U^T,\sqrt{d_T^2-h_T^2})|$. Based on this, we introduce the distribution of the distance from the UE to its closest TBS in Lemma \ref{lem:d_T}.

\begin{lemma}
    The CCDF of $D_T$ is expressed as:
\begin{equation}
    \overline{F}_{D_T}(d_T)=\exp(-\lambda_T \mathcal{B}_T(d_T)),
    \label{FDBdB}
\end{equation}
and the PDF of $D_T$ is expressed as
\begin{equation}
\begin{array}{r@{}l}
    f_{D_T}(d_T)&=2\lambda_T \theta_T(d_T) d_T\exp(-\lambda_T \mathcal{B}_T(d_T))\\
    &\;\;\;\;\;\;\;\;\;\;\;\;\;\;\;\;\;\;\;+\delta_{\infty}(d_T)\exp(-\lambda_T \mathcal{S}_T),
\end{array}
\end{equation}
where $\delta_{\infty}(d_T)$ denotes a Dirac point mass at $+\infty$, $\mathcal{B}_T(d_T)=\mathcal{B}_T^{\rm out}(d_T)-\mathcal{B}_T^{\rm in}(d_T)$ and $\theta_T(d_T)=\theta_T^{\rm out}(d_T)-\theta_T^{\rm in}(d_T)$. $\mathcal{B}_T^{\rm in}(d_T)=|\textbf{b}(\textbf{o}_T,r_{\rm in})\cap \textbf{b}(\textbf{z}_U^T,\sqrt{d_T^2-h_T^2})|$ represents the intersection area of the circular area $\textbf{b}(\textbf{o}_T,r_{\rm in})$ and the circular area $\textbf{b}(\textbf{z}_U^T,\sqrt{d_T^2-h_T^2})$, which can be expressed as:
\begin{equation}
    \begin{array}{r@{}l}
    \mathcal{B}_T^{\rm in}(d_T)&=\theta_T^{\rm in}(d_T)(d_T^2-h_T^2)\\
    &\;\;\;\;\;\;\;\;\;\;\;+\phi_T^{\rm in}(d_T)r_{\rm in}^2-\sin \phi_T^{\rm in}(d_T)r_{\rm in}r_U,
    \end{array}
    \label{BBin}
\end{equation}
where the angles $\theta_T^{\rm in}(d_T)$ and $\phi_T^{\rm in}(d_T)$ are defined as:
    \begin{equation}
        \theta_T^{\rm in}(d_T)=\left\{
        \begin{array}{l}
        \pi, \;\;r_{\rm in}> r_U, 0<\sqrt{d_T^2-h_T^2}\leq r_{\rm in}- r_U\\
        \arccos \frac{r_U^2+d_T^2-h_T^2-r_{\rm in}^2}{2r_U \sqrt{d_T^2-h_T^2}}, \\
        \;\;\;\|r_U-r_{\rm in}\|\leq \sqrt{d_T^2-h_T^2} \leq\|r_U+r_{\rm in}\|  \\
        0, \;\;{\rm otherwise},
        \end{array}
        \right.  
        \label{thetaBin}
    \end{equation}
    \begin{equation}
        \phi_T^{\rm in}(d_T)=\left\{
        \begin{array}{l}
        \pi, \;\;\sqrt{d_T^2-h_T^2}>r_U+r_{\rm in}\\
        \arccos\frac{r_U^2+r_{\rm in}^2-d_T^2+h_T^2}{2r_U r_{\rm in}}, \\
        \;\; \|r_U-r_{\rm in}\|\leq \sqrt{d_T^2-h_T^2} \leq\|r_U+r_{\rm in}\|\\
        0, \;\;{\rm otherwise}.
        \end{array}
        \right.
        \label{phiBin}
    \end{equation}
    Similarly, $\theta_T^{\rm out}(d_T)$ and $\phi_T^{\rm out}(d_T)$ are formulated by replacing $r_{\rm in}$ in \eqref{thetaBin} and \eqref{phiBin} with $r_{\rm out}$, respectively. Also, $\mathcal{B}_T^{\rm out}(d_T)$ can be obtained by replacing $\theta_T^{\rm in}(d_T)$, $\phi_T^{\rm in}(d_T)$ and $r_{\rm in}$ in \eqref{BBin} with $\theta_T^{\rm out}(d_T)$, $\phi_T^{\rm out}(d_T)$ and $r_{\rm out}$, respectively.
\label{lem:d_T}
\end{lemma}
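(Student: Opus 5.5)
The plan is to use the void probability of the PPP $\Phi_T$ and then differentiate. The event $\{D_T>d_T\}$ is exactly the event that no TBS of $\Phi_T$ lies within distance $d_T$ of $\textbf{z}_U^0$. Since all TBSs sit on the plane at height $h_T$, a TBS at $\textbf{z}_T^j$ satisfies $\|\textbf{z}_U^0-\textbf{z}_T^j\|\leq d_T$ if and only if its planar position lies in the disk $\textbf{b}(\textbf{z}_U^T,\sqrt{d_T^2-h_T^2})$. So the relevant region is $\textbf{a}(\textbf{o}_T,r_{\rm in},r_{\rm out})\cap\textbf{b}(\textbf{z}_U^T,\sqrt{d_T^2-h_T^2})$, whose area is $\mathcal{B}_T(d_T)$. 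The void probability of a homogeneous PPP then gives \eqref{FDBdB} directly.

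Next I compute $\mathcal{B}_T$. Because the disk $\textbf{b}(\textbf{o}_T,r_{\rm in})$ is contained in $\textbf{b}(\textbf{o}_T,r_{\rm out})$, the annulus intersection area splits as $\mathcal{B}_T^{\rm out}(d_T)-\mathcal{B}_T^{\rm in}(d_T)$. Each term is a two-disk lens area, with centers at distance $r_U$ and radii $\rho=\sqrt{d_T^2-h_T^2}$ and $R\in\{r_{\rm in},r_{\rm out}\}$. I will use the standard lens formula: the sum of two circular-sector areas minus the two triangles. Here $\theta$ and $\phi$ are the half-angles at the UE projection and at the origin, obtained from the law of cosines; these give the arccos expressions in \eqref{thetaBin}--\eqref{phiBin}. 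The two triangles combine, via the common chord, into a kite of area $r_UR\sin\phi$. This yields $\theta\rho^2+\phi R^2-r_UR\sin\phi$, which is \eqref{BBin}. The degenerate cases are checked by hand:
\begin{itemize}
\item small disk inside the big one, with $\theta=\pi$, $\phi=0$, giving $\pi\rho^2$;
\item big disk inside the small one, with $\phi=\pi$, giving $\pi R^2$;
\item disjoint disks, where all angles vanish.
\end{itemize}

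For the PDF, I differentiate $1-\overline{F}_{D_T}$ and use $\frac{d}{dd_T}\mathcal{B}_T^{(\cdot)}(d_T)=2d_T\theta_T^{(\cdot)}(d_T)$. This follows from $\frac{d}{d\rho}|\text{lens}|=2\rho\,\theta(\rho)$, the arc length of the moving circle lying inside the fixed disk, together with $d\rho/dd_T=d_T/\rho$. I will verify it either geometrically or by directly differentiating \eqref{BBin}, where the $\phi$-dependent terms cancel by the law-of-sines identity $\rho\sin\theta=R\sin\phi$.

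The main obstacle I expect is the boundary issue at large $d_T$. Once $\rho>r_U+r_{\rm out}$, the whole annulus is covered, and $\overline{F}_{D_T}$ plateaus at $\exp(-\lambda_T\mathcal{S}_T)$. This plateau is the probability that the PPP has no points at all, in which case $D_T=\infty$. It is accounted for by the Dirac mass at $+\infty$, so the density integrates to one. I will also check continuity of $\theta$ and $\phi$ across the case boundaries, so that the derivative is valid almost everywhere.
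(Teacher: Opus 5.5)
Your proposal is correct and matches the paper's intended argument. The paper gives no separate proof of this lemma, but its proof of Lemma~\ref{lem:d_H} uses the same ``no transmitter within distance $d_T$'' event followed by differentiation; for the PPP $\Phi_T$ this becomes the void probability $\exp(-\lambda_T\mathcal{B}_T(d_T))$ you use, and the atom $\exp(-\lambda_T\mathcal{S}_T)$ at $+\infty$ accounts for the empty-PPP event. Your added details (the annulus as a difference of two lenses, and $\mathrm{d}\mathcal{B}_T^{(\cdot)}/\mathrm{d}d_T=2d_T\theta_T^{(\cdot)}$ via the inner arc length) are sound, with two small caveats: $r_UR\sin\phi$ is a signed kite area (a dart when one of $\theta,\phi$ is obtuse), and the direct-differentiation check also needs $\rho\cos\theta+R\cos\phi=r_U$, not only $\rho\sin\theta=R\sin\phi$.
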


Based on the distance distributions, we can proceed to conduct the downlink and uplink analysis for the UEs in hard-to-reach areas.

\section{Coverage Performance Analysis}
\subsection{Downlink Analysis}
In \eqref{PDL_tauDL_fomulation}, the probability of successful downlink coverage is defined as the sum of $P_{\rm DL}^T(\tau_{\rm DL})=\mathbb{P}\{ {\rm SINR}_{\rm DL}^T>\tau_{\rm DL},Q=T\}$ and $P_{\rm DL}^{H}(\tau_{\rm DL})=\mathbb{P}\{ {\rm SINR}_{\rm DL}^{H}>\tau_{\rm DL},Q=H\}$. Therefore, we first derive the DL coverage probability $P_{\rm DL}^T(\tau_{\rm DL}|d_T)$ of UE located at $\textbf{z}_U^0=(r_U,0,0)$ conditioned on the distance to the serving TBS $d_T$ in Theorem \ref{theo:DLCP_dB}.

\begin{theorem}
    Conditioned on the distance to the associated TBS $d_T$, the DL coverage probability  $P_{\rm DL}^T(\tau_{\rm DL}|d_T)$ is calculated using
    \begin{equation}
    \begin{array}{r@{}l}
         P&\displaystyle_{\rm DL}^T(\tau_{\rm DL}|d_T)=\overline{F}_{D_H}(t_{T2H}(d_T))\\
         &\displaystyle \times \sum_{n=0}^{m_T-1}\frac{s_T^n}{n!}(-1)^{n}\displaystyle\frac{\partial^n}{\partial s_T^n}\mathcal{L}_{N_0+I_{T}^{\backslash 1}+I_{H}}(s_T|d_T),
    \end{array} 
    \end{equation}
where $s_T=\frac{m_T \tau_{\rm DL} d_T^{\alpha_T}}{p_T}$ and $\overline{F}_{D_H}(d_H)$ is shown in \eqref{FDHdH}. $t_{T2H}(d_T)$ is defined as:
    \begin{equation}
        t_{T2H}(d_T)\overset{\triangle}{=}\min \bigg\{\bigg(\frac{p_T}{p_{H}}\bigg)^{-\frac{1}{\alpha_{H}}}d_T^{\frac{\alpha_T}{\alpha_H}},d_{H}^{\rm max}\bigg\}.
    \end{equation}
    $N_0$ represents the thermal noise, $I_{T}^{\backslash 1}$ is the interference from other TBSs, and $I_H$ represents the interference from all HAPSs when the user is associated with its closest TBS. The Laplace transform of $N_0+I_{T}^{\backslash 1}+I_H$ can be calculated through
    \begin{equation}
    \begin{array}{r@{}l}
        &\mathcal{L}_{N_0+I_{T}^{\backslash 1}+I_H}(s_T|d_T)\\
        &=\mathcal{L}_{N_0}(s_T)\mathcal{L}_{I_{T}^{\backslash 1}}(s_T|d_T)\mathcal{L}_{I_H}(s_T|t_{T2H}(d_T)),
    \end{array}
    \end{equation}
    where $\mathcal{L}_{N_0}(s_T)=\exp(-s_T N_0)$,
    \begin{equation}
    \begin{array}{r@{}l}
    &\displaystyle\mathcal{L}_{I_{T}^{\backslash 1}}(s_T|d_T)=\exp\bigg(-\int_{d_T}^{d_T^{\rm max}}\lambda_T 2\theta_T(r) \\
        &\;\;\;\;\;\;\;\;\;\;\;\;\;\;\;\;\;\;\displaystyle\times\bigg[1-\bigg(1+\frac{s_T p_T r^{-\alpha_T}}{m_T}\bigg)^{-m_T}\bigg] r{\rm d}r\bigg),
\end{array}
\label{L_IB1}
\end{equation}
and
\begin{equation}
\begin{array}{r@{}l}
    &\mathcal{L}_{I_H}(s_T|t_{T2H}(d_T))\\
    &\displaystyle=\bigg[\int_{t_{T2H}(d_T)}^{d_H^{\rm max}}\int_{0}^{\pi}\bigg(1+\frac{s_T  p_{H} d^{-\alpha_{H}}}{m_{H}}10^{\frac{G_{\rm ITU}^{\rm dB}(\psi)-G_{\rm ITU}^{\rm dB}(0)}{10}}\bigg)^{-m_H}\\
    &\displaystyle\;\;\;\;\;\;\;\;\;\;\;\;\;\;\;\;\;\;\;\;\times \frac{1}{\pi}f_{D}(d|t_{T2H}(d_T)){\rm d}\psi{\rm d}d\bigg]^{N_H},
\end{array}
\label{L_IH}
\end{equation}
where $D$ denotes the distance from the UE to an arbitrary HAPS conditioned on being farther than $t_{T2H}(d_T)$, and its conditional PDF is
\begin{equation}
    f_{D}(d|t_{T2H}(d_T))=\frac{2\theta_H(d)d}{\mathcal{S}_H-\mathcal{B}_H(t_{T2H}(d_T))}.
\label{modifiedPDF}
\end{equation}

\label{theo:DLCP_dB}
\end{theorem}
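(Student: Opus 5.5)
Conditioned on $D_T=d_T$, I would split the event $\{{\rm SINR}_{\rm DL}^T>\tau_{\rm DL},Q=T\}$ into the association event and the SINR event. By the association rule, $Q=T$ holds iff $p_H {d_H^1}^{-\alpha_H}\le p_T d_T^{-\alpha_T}$, i.e., iff $D_H\ge (p_T/p_H)^{-1/\alpha_H}d_T^{\alpha_T/\alpha_H}$. Since no HAPS can lie beyond $d_H^{\rm max}$, I would write the threshold as $t_{T2H}(d_T)$. The HAPS BPP is independent of the TBS PPP, so Lemma \ref{lem:d_H} gives $\mathbb{P}\{Q=T|d_T\}=\overline{F}_{D_H}(t_{T2H}(d_T))$. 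This produces the leading factor.

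Next I would handle the SINR event conditioned on $Q=T$ and $d_T$. Since $W_T^1\sim{\rm Gamma}(m_T,1/m_T)$ with integer $m_T$, I would use the standard Gamma CCDF identity $\mathbb{P}\{W>x\}=\sum_{n=0}^{m_T-1}\frac{(m_Tx)^n}{n!}e^{-m_Tx}$ with $x=\tau_{\rm DL}d_T^{\alpha_T}I/p_T$, where $I=N_0+I_T^{\backslash1}+I_H$. Taking the expectation over $I$ and using $\mathbb{E}[I^ne^{-sI}]=(-1)^n\frac{\partial^n}{\partial s^n}\mathcal{L}_I(s)$ at $s=s_T$ yields the sum. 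The three terms of $I$ are independent: the noise is deterministic, the TBS PPP is independent of the HAPSs, and the antenna angles and fading are independent. The Laplace transform therefore factors into the stated product.

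For $\mathcal{L}_{I_T^{\backslash1}}$, I would use the fact that, conditioned on the nearest TBS being at distance $d_T$, the other TBSs form a PPP on the part of the annulus beyond $d_T$. The probability generating functional of the PPP, combined with the Gamma MGF $\mathbb{E}[e^{-sW}]=(1+s/m_T)^{-m_T}$, gives the stated exponent. The step that needs care is the intensity measure as a function of distance. It equals $\lambda_T\,{\rm d}\mathcal{B}_T(r)$, and from Lemma \ref{lem:d_T} ${\rm d}\mathcal{B}_T/{\rm d}r=2\theta_T(r)r$. To justify this, I would check that the $\phi$ and $\sin\phi$ terms cancel against the derivative of $\theta$ in the lens-area formula; this is the same identity that yields the PDF in Lemma \ref{lem:d_T}.

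The main obstacle is $\mathcal{L}_{I_H}$, because the HAPSs form a BPP rather than a PPP. Conditioned on $Q=T$, every HAPS lies farther than $t_{T2H}(d_T)$. Since the $N_H$ points are i.i.d. uniform, conditioning on all of them exceeding this distance leaves them i.i.d., each distributed according to the restricted law. That law has density $2\theta_H(d)d/(\mathcal{S}_H-\mathcal{B}_H(t_{T2H}))$, obtained by differentiating $\mathcal{B}_H$ as in Lemma \ref{lem:d_H} and normalizing. Each HAPS independently receives an off-boresight angle uniform on $[0,\pi]$, contributing the factor $\frac1\pi$, and a Gamma$(m_H)$ fading gain, contributing $(1+s_Tp_Hd^{-\alpha_H}G/m_H)^{-m_H}$. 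Because all $N_H$ HAPSs interfere when the UE is served by a TBS, the per-HAPS expectation is raised to the power $N_H$. This gives \eqref{L_IH} with \eqref{modifiedPDF}. When $t_{T2H}=d_H^{\rm max}$, the prefactor vanishes, so the degenerate case needs no separate treatment.
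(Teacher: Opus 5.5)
Your proposal is correct and follows essentially the same route as the paper's proof in Appendix A. Both factor out $\mathbb{P}\{Q=T|d_T\}=\overline{F}_{D_H}(t_{T2H}(d_T))$, expand the integer-shape Gamma CCDF into derivatives of the Laplace transform, factor that transform by independence, apply the PPP PGFL for the TBSs, and use the i.i.d. restricted BPP law with a uniform off-boresight angle raised to the power $N_H$. Your explicit justifications also fill in steps the paper leaves to citation: the intensity ${\rm d}\mathcal{B}_T(r)=2\theta_T(r)r\,{\rm d}r$, the fact that the HAPSs stay i.i.d. after conditioning on $Q=T$, and the vanishing prefactor when $t_{T2H}(d_T)=d_H^{\rm max}$.
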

\begin{IEEEproof}
    See Appendix \ref{app:DLCP_dB}.
\end{IEEEproof}
When HAPSs are equipped with omnidirectional antennas rather than directional antennas, the Laplace transform of interference is reduced as shown in Remark \ref{rem:1}.

\begin{remark}
When HAPSs are equipped with omnidirectional antennas, the Laplace transform of $I_H$ simplifies to:
\begin{equation}
\begin{array}{r@{}l}
    &\displaystyle\mathcal{L}_{I_H}(s_T|t_{T2H}(d_T))=\bigg[\int_{t_{T2H}(d_T)}^{d_H^{\rm max}}\\
    &\displaystyle \; \times \bigg(1+\frac{s_T  p_{H} d^{-\alpha_{H}}}{m_{H}}\bigg)^{-m_H} f_{D}(d|t_{T2H}(d_T)){\rm d}d\bigg]^{N_H},
\end{array}
\label{L_IH_omnidirectional}
\end{equation}
where $10^{\frac{G_{\rm ITU}^{\rm dB}(\psi)-G_{\rm ITU}^{\rm dB}(0)}{10}}$ in \eqref{L_IH} is replaced by constant 1.
\label{rem:1}
\end{remark}

Next, we derive the DL coverage probability $P_{\rm DL}^{H}(\tau_{\rm DL}|d_H)$ of UE located at $\textbf{z}_U^0$ conditioned on the distance to the serving HAPS $d_H$ in Theorem \ref{theo:DLCP_dH}.

\begin{theorem}
    Conditioned on the distance to the associated HAPS $d_H$, the DL coverage probability $P_{\rm DL}^{H}(\tau_{\rm DL}|d_H)$ is calculated using
    \begin{equation}
    \begin{array}{r@{}l}
         P&\displaystyle_{\rm DL}^{H}(\tau_{\rm DL}|d_H)=\overline{F}_{D_T}(t_{H2T}(d_H))\\
         &\displaystyle \times \sum_{n=0}^{m_H-1}\frac{s_H^n}{n!}(-1)^{n}\displaystyle\frac{\partial^n}{\partial s_H^n}\mathcal{L}_{N_0+I_T+I_{H}^{\backslash 1}}(s_H|d_H),
    \end{array} 
    \end{equation}
    where $s_H=\frac{m_H \tau_{\rm DL} d_H^{\alpha_H}}{p_H}$ and $\overline{F}_{D_T}(d_T)$ is shown in \eqref{FDBdB}. $t_{H2T}(d_{H})$ is defined as:
\begin{equation}
    t_{H2T}(d_{H})\overset{\triangle}{=}\min \bigg\{\bigg(\frac{p_{H}}{p_T}\bigg)^{-\frac{1}{\alpha_T}}d_H^{\frac{\alpha_H}{\alpha_{T}}},d_T^{\rm max}\bigg\}.
\end{equation}
    $I_T$ is the interference from TBSs, and $I_{H}^{\backslash 1}$ represents the interference from other HAPSs. The Laplace transform of $N_0+I_T+I_{H}^{\backslash 1}$ can be calculated through
    \begin{equation}
    \begin{array}{r@{}l}
        &\mathcal{L}_{N_0+I_T+I_{H}^{\backslash 1}}(s_H|d_H)\\
        &=\mathcal{L}_{N_0}(s_H)\mathcal{L}_{I_T}(s_H|t_{H2T}(d_H))\mathcal{L}_{I_{H}^{\backslash 1}}(s_H|d_H),
    \end{array}
    \end{equation}
    where $\mathcal{L}_{N_0}(s_H)=\exp(-s_H N_0)$, $\mathcal{L}_{I_T}(s_H|t_{H2T}(d_H))$ can be obtained by replacing $s_T$ and $d_T$ in \eqref{L_IB1} with $s_H$ and $t_{H2T}(d_H)$. $\mathcal{L}_{I_{H}^{\backslash 1}}(s_H|d_H)$ can be obtained by replacing $s_T$, $t_{T2H}(d_T)$ and $N_H$ in \eqref{L_IH} with $s_H$, $d_H$ and $N_H-1$, respectively. 
\label{theo:DLCP_dH}
\end{theorem}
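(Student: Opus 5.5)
The argument mirrors that of Theorem \ref{theo:DLCP_dB} (Appendix \ref{app:DLCP_dB}), with the roles of the two tiers swapped. Throughout, we condition on the event that the nearest HAPS is at distance $D_H=d_H$.

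\textbf{Step 1 (association factor).} The event $Q=H$ is $p_H d_H^{-\alpha_H}>p_T {D_T}^{-\alpha_T}$. This is equivalent to $D_T>(p_H/p_T)^{-1/\alpha_T}d_H^{\alpha_H/\alpha_T}$. Since $D_T\le d_T^{\rm max}$ whenever a TBS exists, the threshold can be capped at $d_T^{\rm max}$, which gives $t_{H2T}(d_H)$. The empty-PPP atom at $+\infty$ in Lemma \ref{lem:d_T} takes care of the case with no TBS. Because $\Phi_T$ is independent of $\Phi_H$, we get $\mathbb{P}\{Q=H\,|\,d_H\}=\overline{F}_{D_T}(t_{H2T}(d_H))$ from \eqref{FDBdB}.

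Conditioned on $Q=H$, the TBS process is a PPP restricted to the part of the annulus at distance greater than $t_{H2T}(d_H)$ from the UE. This follows from the independence property of a PPP on disjoint regions.

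\textbf{Step 2 (fading).} The serving gain $W_H^1$ is Gamma with integer shape $m_H$ and unit mean, so its CCDF is $\Pr\{W>x\}=e^{-m_Hx}\sum_{n<m_H}(m_Hx)^n/n!$. Write $I=N_0+I_T+I_H^{\backslash 1}$. Then $\mathbb{P}\{{\rm SINR}>\tau_{\rm DL}\}=\mathbb{E}[\mathbb{P}\{W_H^1>\tau_{\rm DL} d_H^{\alpha_H} I/p_H\}]$. Applying the standard identity $\mathbb{E}[I^n e^{-sI}]=(-1)^n\partial_s^n\mathcal{L}_I(s)$ at $s=s_H$ produces the stated finite sum.

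\textbf{Step 3 (factorization of the Laplace transform).} Noise, TBS interference and HAPS interference are mutually independent given $d_H$ and $Q=H$, so $\mathcal{L}_I$ factors into three terms.

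The noise term is $e^{-s_HN_0}$.

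For $I_T$, I will apply the PGFL of the restricted PPP and average over the Gamma fading. Using the polar-type density $2\lambda_T\theta_T(r)r\,{\rm d}r$, which is the derivative of $\lambda_T\mathcal{B}_T$ from Lemma \ref{lem:d_T}, gives \eqref{L_IB1} with lower limit $t_{H2T}(d_H)$ and $s_T$ replaced by $s_H$.

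For $I_H^{\backslash 1}$, conditioned on the nearest HAPS being at $d_H$, the remaining $N_H-1$ BPP points are i.i.d. in the disk. Each lies at distance greater than $d_H$ with conditional density \eqref{modifiedPDF}, where $t_{T2H}$ is replaced by $d_H$; this is obtained by differentiating $\mathcal{B}_H$ as in Lemma \ref{lem:d_H}. Each of these points has an independent Gamma gain and an off-boresight angle uniform on $[0,\pi]$. Averaging $e^{-s_H p_H W G(\psi) d^{-\alpha_H}}$ over these three variables and raising to the power $N_H-1$ yields \eqref{L_IH} with the stated substitutions.

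\textbf{Main obstacle.} The delicate step is justifying that, given the minimum distance $d_H$, the other $N_H-1$ HAPS distances are i.i.d. with the truncated density. This holds for a BPP by symmetry: condition on which point is the nearest, and the others are uniform on the complement of the ball. The same care is needed to check that the TBS restriction uses exactly the exclusion radius $t_{H2T}$, including the capping at $d_T^{\rm max}$.
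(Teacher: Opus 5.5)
Your proposal is correct and follows the paper's route. The paper's proof simply defers to Appendix~\ref{app:DLCP_dB} with the tiers swapped, noting that only $N_H-1$ HAPSs interfere and that the TBSs lie outside distance $t_{H2T}(d_H)$; your Steps 1--3 carry out exactly that, and your BPP order-statistics argument and PPP-restriction argument supply details the paper leaves implicit.
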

\begin{IEEEproof}
    Different from Theorem \ref{theo:DLCP_dB}, the interference comes from $N_H-1$ HAPSs rather than $N_H$ HAPSs since the closest HAPS is serving the UE. When the UE is associated with its closest HAPS with distance $d_H$, TBSs are located outside the distance $t_{H2T}(d_H)$.
\end{IEEEproof}

Based on the DL coverage probabilities conditioned on the distances to the closest HAPS or TBS, we can derive the DL coverage probability for the UE in Theorem \ref{theo:DLCP}.

\begin{theorem} The downlink coverage probability of the UE located at $\textbf{z}_U^0=(r_U,0,0)$ is formulated as:
    \begin{equation}
    \begin{array}{r@{}l}
         P_{\rm DL}(\tau_{\rm DL})&=P_{\rm DL}^T(\tau_{\rm DL})+P_{\rm DL}^{H}(\tau_{\rm DL})\\
         &\displaystyle=\int_{h_T}^{d_T^{\rm max}}P_{\rm DL}^T(\tau_{\rm DL}|d_T)f_{D_T}(d_T){\rm d}d_T\\
         &\displaystyle+\int_{h_H}^{d_{H}^{\rm max}}P_{\rm DL}^{H}(\tau_{\rm DL}|d_H)f_{D_H}(d_H){\rm d}d_H,
    \end{array}
    \end{equation}
where the distributions of $d_T$ and $d_H$ are shown in Lemmas \ref{lem:d_T} and \ref{lem:d_H}, respectively, and the conditional DL coverage probabilities $P_{\rm DL}^T(\tau_{\rm DL}|d_T)$ and $P_{\rm DL}^{H}(\tau_{\rm DL}|d_H)$ are shown in Theorems \ref{theo:DLCP_dB} and \ref{theo:DLCP_dH}, respectively.
\label{theo:DLCP}
\end{theorem}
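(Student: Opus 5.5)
The plan is to apply the law of total probability to the decomposition \eqref{PDL_tauDL_fomulation}, conditioning each term on the distance to the candidate serving node. The two events $\{Q=T\}$ and $\{Q=H\}$ are disjoint and exhaust the sample space up to the null tie event. Hence $P_{\rm DL}(\tau_{\rm DL})=P_{\rm DL}^T(\tau_{\rm DL})+P_{\rm DL}^{H}(\tau_{\rm DL})$ holds immediately. For the first term, we condition on $D_T=d_T$ and write
\[
P_{\rm DL}^T(\tau_{\rm DL})=\mathbb{E}_{D_T}\big[\mathbb{P}\{{\rm SINR}_{\rm DL}^T>\tau_{\rm DL},Q=T\mid D_T\}\big].
\]
The inner conditional probability is exactly $P_{\rm DL}^T(\tau_{\rm DL}|d_T)$ of Theorem \ref{theo:DLCP_dB}. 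That theorem already contains the association factor $\overline{F}_{D_H}(t_{T2H}(d_T))=\mathbb{P}\{Q=T\mid D_T=d_T\}$, together with the interference Laplace transforms conditioned on the event that all HAPSs lie beyond $t_{T2H}(d_T)$. We then integrate against the density $f_{D_T}$ of Lemma \ref{lem:d_T}. The second term is handled symmetrically: condition on $D_H=d_H$, use Theorem \ref{theo:DLCP_dH}, whose factor $\overline{F}_{D_T}(t_{H2T}(d_H))$ is $\mathbb{P}\{Q=H\mid D_H=d_H\}$, and integrate against $f_{D_H}$ from Lemma \ref{lem:d_H}.

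The remaining work is to justify the integration limits. For HAPSs, the support of $D_H$ is $[d_H^{\min},d_H^{\max}]=[h_H,d_H^{\max}]$. Outside this interval the density vanishes: $\theta_H=0$ there, and $\mathcal{B}_H=\mathcal{S}_H$ beyond $d_H^{\max}$. For TBSs, $f_{D_T}$ vanishes below $d_T^{\min}\ge h_T$ because $\theta_T=0$ there, so starting the integral at $h_T$ is harmless.

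The one delicate point, and the step I expect to need care, is the atom of $f_{D_T}$ at $+\infty$ with mass $\exp(-\lambda_T\mathcal{S}_T)$, which corresponds to the annulus containing no TBS. On that event $Q=H$ almost surely, and the UE cannot be served by a TBS. Its contribution to $P_{\rm DL}^T$ is therefore zero, which is why the first integral is truncated at $d_T^{\max}$ and only the continuous part of $f_{D_T}$ enters. That event is still accounted for in the second term. Conditioned on $D_H=d_H$, the factor $\overline{F}_{D_T}(t_{H2T}(d_H))=\exp(-\lambda_T\mathcal{B}_T(t_{H2T}(d_H)))$ includes the empty-annulus configurations, because $\mathcal{B}_T$ saturates at $\mathcal{S}_T$.

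To make these points rigorous I would check two facts. First, $t_{T2H}$ and $t_{H2T}$ are inverse association thresholds, capped at $d_H^{\max}$ and $d_T^{\max}$ respectively, so that $\{Q=T\}\cap\{D_T=d_T\}=\{D_H>t_{T2H}(d_T)\}$, and symmetrically for $\{Q=H\}$. Second, $D_T$ and $D_H$ are independent, since $\Phi_T$ and $\Phi_H$ are independent; this is what allows the association probability to factor out in each conditional theorem. With these in place, summing the two integrals gives the stated expression.
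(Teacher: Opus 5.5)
Your proposal is correct and is how the paper obtains this result. The paper gives no separate proof; it applies the law of total probability over the two disjoint association events, using the conditional coverage probabilities of Theorems \ref{theo:DLCP_dB} and \ref{theo:DLCP_dH} (which already contain the association factors) averaged against the serving-distance densities of Lemmas \ref{lem:d_T} and \ref{lem:d_H}. Your checks on the integration limits, the atom of $f_{D_T}$ at $+\infty$, and the capped thresholds (which give $\overline{F}_{D_H}(d_H^{\rm max})=0$ and $\overline{F}_{D_T}(d_T^{\rm max})=\exp(-\lambda_T\mathcal{S}_T)$) are consistent with the paper's definitions and make explicit what it leaves implicit.
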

To reduce the computational complexity of conditional coverage probabilities, we adopt the upper bound of the upper incomplete Gamma function shown in \cite{8833522} and obtain their approximations in Remark \ref{rem:2}.
\begin{remark}
    The conditional DL coverage probabilities in Theorems \ref{theo:DLCP_dB} and \ref{theo:DLCP_dH} can be approximated using:
    \begin{equation}
    \begin{array}{r@{}l}
        P&_{\rm DL}^T(\tau_{\rm DL}|d_T)\approx\displaystyle \overline{F}_{D_H}(t_{T2H}(d_T))\\
        &\displaystyle \times\sum_{n=1}^{m_T}\binom{m_T}{n} (-1)^{n+1}\exp(-n\beta_2^T s_T N_0)\\
        &\displaystyle\times\mathcal{L}_{I_{T}^{\backslash 1}}\big(n\beta_2^T s_T|d_T\big)\mathcal{L}_{I_{H}^{}}\big(n\beta_2^T s_T|t_{T2H}(d_T)\big),
    \end{array}
    \end{equation}
and 
\begin{equation}
    \begin{array}{r@{}l}
        P&_{\rm DL}^{H}(\tau_{\rm DL}|d_H)\approx\displaystyle \overline{F}_{D_T}(t_{H2T}(d_H))\\
        &\displaystyle \times \sum_{n=1}^{m_H}\binom{m_H}{n} (-1)^{n+1}\exp(-n\beta_2^H s_H N_0)\\
        &\displaystyle\times\mathcal{L}_{I_{H}^{\backslash 1}}\big(n\beta_2^H s_H|d_H\big)\mathcal{L}_{I_T^{}}\big(n\beta_2^H s_H|t_{H2T}(d_H)\big),
    \end{array}
    \end{equation}
where $\beta_2^T=(m_T!)^{-\frac{1}{m_T}}$ and $\beta_2^H=(m_H!)^{-\frac{1}{m_H}}$.
\label{rem:2}
\end{remark}

\subsection{Uplink Analysis}
In addition to the DL service, it is necessary to evaluate the UL performance via HAPSs. We first investigate the UL coverage probability of the UE located at $\textbf{z}_U^0=(r_U,\theta_U,0)$ conditioned on the location of its closest HAPS. This facilitates the analysis and modeling of interference from other active UEs. On the other hand, when the HAPS locations are fixed rather than random, such conditional coverage probability can directly characterize the UL performance of UEs at different locations in the hard-to-reach area. After that, using the PDF of the location of the closest HAPS, we derive the average UL coverage performance for the case where the locations of HAPSs are random.

We denote the location of the serving HAPS as $\textbf{z}_H^{1}=(r_H^1,\theta_H^1,h_H)$. The unit direction vector from it to the considered UE is $\textbf{v}_0=\overrightarrow{\textbf{z}_H^1\textbf{z}_U^0}/\|\overrightarrow{\textbf{z}_H^1\textbf{z}_U^0}\|$. For any interfering UE located at $\textbf{z}_U^i=(r_U^i,\theta_U^i,0)$, the vector from the serving HAPS to it is $\textbf{v}_i=\overrightarrow{\textbf{z}_H^1\textbf{z}_U^i}/\|\overrightarrow{\textbf{z}_H^1\textbf{z}_U^i}\|$. Therefore, the off-boresight angle of this interfering UE can be calculated using:
\begin{equation}
    \psi_{\rm UL}^{i}=\psi_{\rm UL}(r_U^i,\theta_U^i|\textbf{z}_H^1)=\arccos{(\textbf{v}_0^T \textbf{v}_i)}.
\end{equation}
Therefore, we derive the UL coverage probability of the UE in Theorem \ref{theo:ULCP_oneshot}:

\begin{theorem} Conditioned on the location of the closest HAPS, the UL coverage probability can be calculated using:
\begin{equation}
\begin{array}{r@{}l}
    \displaystyle P_{\rm UL}^{H}&(\tau_{\rm UL}|\textbf{z}_H^1)=\overline{F}_{D_T}(t_{H2T}(\|\textbf{z}_H^1-\textbf{z}_U^0\|))\\
    &\displaystyle\times\sum_{n=0}^{m_H-1}\frac{s_U^n}{n!}(-1)^{n}\displaystyle\frac{\partial^n}{\partial s_U^n}\mathcal{L}_{N_0+I_U}(s_U|\textbf{z}_H^1),
    \end{array} 
    \label{conditionalULcov}
    \end{equation}
    where $s_U=\frac{m_H \tau_{\rm UL} \|\textbf{z}_H^1-\textbf{z}_U^0\|^{\alpha_H}}{p_U}$, $I_U$ represents the interference from grant-free UEs. The Laplace transform of $N_0+I_U$ can be calculated through
    \begin{equation}
    \begin{array}{r@{}l}
        &\mathcal{L}_{N_0+I_U}(s_U|\textbf{z}_H^1)=\mathcal{L}_{N_0}(s_U)\mathcal{L}_{I_U}(s_U|\textbf{z}_H^1),
    \end{array}
    \end{equation}
    where $\mathcal{L}_{N_0}(s_U)=\exp(-s_U N_0)$, and $\mathcal{L}_{I_U}(s_U|\textbf{z}_H^1)$ is calculated using:
\begin{equation}
\begin{array}{r@{}l}
    &\mathcal{L}_{I_U}(s_U|\textbf{z}_H^1){=}\displaystyle \exp\bigg(-\int_{0}^{r_{\rm in}} \int_{-\pi}^{\pi}  \lambda_U \bigg[ 1-\\
    &\displaystyle\;\;\bigg(1+\frac{s_U  p_{U} d(r,\theta|\textbf{z}_H^1)^{-\alpha_{H}}G_{\rm ITU}^{\rm linear}(r,\theta|\textbf{z}_H^1)}{m_{H}} \bigg)^{-m_H}\bigg]r{\rm d}\theta{\rm d}r\bigg), 
\end{array}
\end{equation}
where
\begin{equation}
    d(r,\theta|\textbf{z}_H^1)=\sqrt{h_H^2+(r_H^1)^2+r^2-2r_H^1 r \cos (\theta_H^1-\theta)},
\end{equation}
and
\begin{equation}
    G_{\rm ITU}^{\rm linear}(r,\theta|\textbf{z}_H^1)=10^{\frac{G_{\rm ITU}^{\rm dB}(\psi_{\rm UL}(r,\theta|\textbf{z}_H^1))-G_{\rm ITU}^{\rm dB}(0)}{10}}.
\end{equation}

\label{theo:ULCP_oneshot}
\end{theorem}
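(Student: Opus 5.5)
The plan follows the pattern used for Theorems \ref{theo:DLCP_dB} and \ref{theo:DLCP_dH}: split off the association event, then compute the conditional SINR coverage with the standard Gamma-fading argument.

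\textbf{Step 1 (association factor).} Fix $\textbf{z}_H^1$ and set $d_H=\|\textbf{z}_H^1-\textbf{z}_U^0\|$. The event $Q=H$ means $p_H d_H^{-\alpha_H}>p_T (D_T)^{-\alpha_T}$, i.e.\ $D_T>t_{H2T}(d_H)$. Since $\Phi_T$ is independent of $\Phi_H$, $\Phi_U$ and all fading, this event has probability $\overline{F}_{D_T}(t_{H2T}(d_H))$ by Lemma \ref{lem:d_T}. It factors out of the coverage probability, because the uplink SINR involves only $\Phi_U$ and the uplink fading, on a separate band with no TBS interference.

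\textbf{Step 2 (conditional coverage given the interference).} Write ${\rm SINR}_{\rm UL}^H>\tau_{\rm UL}$ as $W_U^0> \frac{\tau_{\rm UL} d_H^{\alpha_H}}{p_U}(N_0+I_U)$, with $I_U=\sum_i p_U W_U^i G_{\rm UL}^i (d_U^i)^{-\alpha_H}$. The gain $W_U^0\sim\Gamma(m_H,1/m_H)$ with integer $m_H$, so its CCDF is $e^{-m_H x}\sum_{n=0}^{m_H-1}(m_H x)^n/n!$. Taking the expectation over $N_0+I_U$ and using $\mathbb{E}[X^n e^{-sX}]=(-1)^n\frac{\partial^n}{\partial s^n}\mathcal{L}_X(s)$ at $s=s_U$ gives the stated finite sum. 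Because the noise is deterministic and independent of $I_U$, we get $\mathcal{L}_{N_0+I_U}=e^{-s_U N_0}\mathcal{L}_{I_U}$.

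\textbf{Step 3 (Laplace transform of $I_U$).} This is the step I expect to need the most care.

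\begin{itemize}
\item Conditioned on $\textbf{z}_H^1$, the serving beam points along $\textbf{v}_0$. Each interferer at $(r,\theta,0)$ then has a deterministic distance $d(r,\theta|\textbf{z}_H^1)$ from the law of cosines in polar coordinates, and a deterministic off-boresight angle $\psi_{\rm UL}(r,\theta|\textbf{z}_H^1)$. Hence its relative gain $G_{\rm ITU}^{\rm linear}(r,\theta|\textbf{z}_H^1)$ is also a deterministic function of the location.
\item The fading marks $W_U^i$ are i.i.d.\ Gamma and independent of $\Phi_U$. Averaging each one gives the per-point factor $\big(1+s_U p_U d^{-\alpha_H}G/m_H\big)^{-m_H}$.
\item The probability generating functional of the PPP with density $\lambda_U$ on $\textbf{b}(\textbf{o}_U,r_{\rm in})$, written in polar coordinates with area element $r\,{\rm d}\theta\,{\rm d}r$, then yields the stated exponential.
\end{itemize}

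The subtlety is to justify that conditioning on $\textbf{z}_H^1$, which is determined by $\Phi_H$, does not alter $\Phi_U$. This holds because the grant-free interferers form an independent PPP. By Slivnyak's theorem, the typical scheduled UE is an added point that does not affect the distribution of the others, so no exclusion region is needed.

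Multiplying the association factor from Step 1 by the conditional coverage from Steps 2 and 3 gives \eqref{conditionalULcov}.
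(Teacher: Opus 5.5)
Your proposal is correct and follows essentially the same route as the paper's proof. Like the paper, you factor out $\mathbb{P}\{Q=H|\textbf{z}_H^1\}=\overline{F}_{D_T}(t_{H2T}(\|\textbf{z}_H^1-\textbf{z}_U^0\|))$ using the independence of $\Phi_T$, expand the integer-$m_H$ Gamma CCDF into the finite Laplace-derivative sum, and evaluate $\mathcal{L}_{I_U}$ through the PPP probability generating functional in polar coordinates, with distance and off-boresight gain deterministic given $\textbf{z}_H^1$. Your appeal to Slivnyak's theorem is harmless but not needed: in the model the scheduled UE sits at the fixed point $\textbf{z}_U^0$ and is not a point of $\Phi_U$, so independence of $\Phi_U$ already suffices.
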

\begin{IEEEproof}
    See Appendix \ref{app:ULCP_oneshot}.
\end{IEEEproof}

Next, we derive the PDF of the location of the serving HAPS and evaluate the average UL performance for the UE when the location of the serving HAPS is unknown in Theorem \ref{theo:ULCP}.

\begin{theorem}
    The average UL coverage probability of the UE can be calculated using:
\begin{equation}
    P_{\rm UL}^{H}(\tau_{\rm UL})=\int_{\textbf{b}(\textbf{o}_H,r_{\rm in})}P_{\rm UL}^{H}(\tau_{\rm UL}|\textbf{z}_H^1)f(\textbf{z}_H^1){\rm d}\textbf{z}_H^1,
\end{equation}
where the conditional UL coverage probability is defined in \eqref{conditionalULcov} and $f(\textbf{z}_H^1)$ is formulated as:
\begin{equation}
\begin{array}{r@{}l}
\displaystyle f(\textbf{z}_H^1)=\frac{N_H}{\mathcal{S}_H}&\displaystyle \bigg(1-\frac{\mathcal{B}_{H}\big(\|\textbf{z}_H^1-\textbf{z}_U^0\|\big)}{\mathcal{S}_H}\bigg)^{N_H-1}.
\label{f_zH1}
\end{array}
\end{equation}
\label{theo:ULCP}
\end{theorem}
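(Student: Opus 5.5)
The plan is to condition on the location of the closest HAPS and then apply the law of total probability. Since $P_{\rm UL}^H(\tau_{\rm UL})=\mathbb{P}\{{\rm SINR}_{\rm UL}^H>\tau_{\rm UL},Q=H\}$, we have
\begin{equation*}
P_{\rm UL}^H(\tau_{\rm UL})=\mathbb{E}_{\textbf{z}_H^1}\big[P_{\rm UL}^H(\tau_{\rm UL}|\textbf{z}_H^1)\big].
\end{equation*}
The conditional term is exactly what Theorem~\ref{theo:ULCP_oneshot} provides, including the association factor $\overline{F}_{D_T}(t_{H2T}(\|\textbf{z}_H^1-\textbf{z}_U^0\|))$. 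What remains is to identify the density $f(\textbf{z}_H^1)$ of the location of the nearest HAPS on the disk $\textbf{b}(\textbf{o}_H,r_{\rm in})$ and write the expectation as an area integral over that disk.

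To find the density, use the BPP structure. The $N_H$ HAPSs are i.i.d.\ uniform on $\textbf{b}(\textbf{o}_H,r_{\rm in})$, so each has density $1/\mathcal{S}_H$. Take an infinitesimal area element ${\rm d}\textbf{z}$ at $\textbf{z}$ and let $d=\|\textbf{z}-\textbf{z}_U^0\|$. The event that the nearest HAPS lies in ${\rm d}\textbf{z}$ is the disjoint union over $k$ of two requirements. First, HAPS $k$ falls in ${\rm d}\textbf{z}$, which has probability ${\rm d}\textbf{z}/\mathcal{S}_H$. Second, each of the other $N_H-1$ HAPSs lies outside the set of deployment points within distance $d$ of the UE. That set is $\textbf{b}(\textbf{o}_H,r_{\rm in})\cap\textbf{b}(\textbf{z}_U^H,\sqrt{d^2-h_H^2})$, whose area is $\mathcal{B}_H(d)$ as defined before Lemma~\ref{lem:d_H}. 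Because the HAPS locations are independent, the probability of the second requirement is $(1-\mathcal{B}_H(d)/\mathcal{S}_H)^{N_H-1}$. Ties occur with probability zero, so summing over the $N_H$ choices of $k$ gives \eqref{f_zH1}.

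As a consistency check, integrate $f$ over the circles $\{\textbf{z}:\|\textbf{z}-\textbf{z}_U^0\|=d\}$ intersected with the disk. The arc of such a circle inside the disk has angular measure $2\theta_H(d)$ and radius $\sqrt{d^2-h_H^2}$. Changing variables via $\rho\,{\rm d}\rho=d\,{\rm d}d$ therefore yields $2N_H\theta_H(d)\,d\,(1-\mathcal{B}_H(d)/\mathcal{S}_H)^{N_H-1}/\mathcal{S}_H$, which matches $f_{D_H}$ in Lemma~\ref{lem:d_H}. This confirms that $f$ is a proper density that integrates to one.

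The one point needing care is justifying the conditioning: the conditional probability from Theorem~\ref{theo:ULCP_oneshot} has to be the correct one given that the nearest HAPS sits at $\textbf{z}_H^1$. This holds for two reasons. The interfering UEs $\Phi_U$ form a PPP independent of $\Phi_H$, so $\mathcal{L}_{I_U}$ depends only on $\textbf{z}_H^1$. The event $Q=H$ depends on $\Phi_T$, which is independent of $\Phi_H$, only through $d_H^1=\|\textbf{z}_H^1-\textbf{z}_U^0\|$. The remaining HAPSs therefore do not enter the uplink SINR. Granting this, Fubini's theorem gives the stated integral.
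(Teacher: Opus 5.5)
Your proposal is correct and follows essentially the same route as the paper. The paper also obtains $f(\textbf{z}_H^1)$ by the order-statistics argument for the BPP: one of the $N_H$ HAPSs sits at the point with density $1/\mathcal{S}_H$, and the other $N_H-1$ lie outside the intersection region of area $\mathcal{B}_H$. It then averages the conditional result of Theorem~\ref{theo:ULCP_oneshot}, with the association event entering through $t_{H2T}$. Your marginalization check against $f_{D_H}$ in Lemma~\ref{lem:d_H} and your independence argument for the conditioning are sound, and they make explicit what the paper leaves implicit.
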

\begin{IEEEproof}
    The PDF of $\textbf{z}_H^1$ can be obtained directly from order statistics. For a candidate point $\textbf{z}_H^1$, one of HAPSs should be located at this point with probability $1/\mathcal{S}_H$, and the remaining $N_H-1$ HAPSs must be located outside the circular area $\textbf{b}(\textbf{z}_U^H,\sqrt{\|\textbf{z}_H^1-\textbf{z}_U^0\|^2-h_H^2})$, each of them has probability $1-\mathcal{B}_{H}\big(\|\textbf{z}_H^1-\textbf{z}_U^0\|\big)/\mathcal{S}_H$. When the UE is associated with HAPSs, there should be no candidate TBS with a distance less than $t_{H2T}(\|\textbf{z}_H^1-\textbf{z}_U^0\|)$.
\end{IEEEproof}

\section{Simulation Results and Discussion}
In this manuscript, we conduct 20,000 Monte Carlo simulations for DL and UL analysis for a circular hard-to-reach area with radius $200\,{\rm km}$. We assume that HAPSs are uniformly deployed in this area, and TBSs are deployed in the annular area with the same center, whose inner radius and outer radius are $r_{\rm in}=200\,{\rm km}$ and $r_{\rm out}=300\,{\rm km}$. For the DL case, the transmit powers of HAPSs and TBSs are $p_{H}^{\rm Tx}=50\,{\rm dBm}$ and $p_T^{\rm Tx}=43\,{\rm dBm}$ \cite{10040542} and the density of TBSs is assumed to be $\lambda_T=0.1\,{\rm TBSs/km^2}$ \cite{11121572}. The receiver gain of UEs is $3\,{\rm dB}$. The carrier frequency is $2.4\,{\rm GHz}$, and the speed of light is $c=3\times 10^{8}\,{\rm m/s}$. The bandwidth is $B=10\,{\rm MHz}$ and the noise power density is $-174\,{\rm dBm/Hz}$. The shape parameters for the HAPS-UE channels and TBS-UE channels are $m_H=3$ and $m_T=1$, respectively. The path loss exponents of the HAPS-UE channels and TBS-UE channels are $\alpha_H=2.5$ and $\alpha_T=3$, respectively. The altitude of HAPSs is $h_H=20\,{\rm km}$ and the altitude of TBSs is $h_T=30\,{\rm m}$. In the UL case, we further assume that the transmit power of UEs or IoT GWs is $p_U^{\rm Tx}=23\,{\rm dBm}$ and the grant-free interfering UE density in the same resource block is $\lambda_U=0.02\,{\rm UEs/km^2}$ \cite{sendra2020lorawan,reddy2022uplink}. The constant attenuation factor $s_A=0\,{\rm dB}$ represents a favorable channel environment, where the additional atmospheric absorption and environmental attenuation are negligible. The solid lines represent the analytical results and the markers represent the simulation results.
\subsection{Downlink Performance}

In Fig. \ref{fig:DL_without_directional_beamforming}, we show the DL coverage probability versus the distance $r_U$ from a UE to the center of the considered hard-to-reach area when the HAPSs are equipped with omnidirectional antennas. The analysis is shown in Theorem \ref{theo:DLCP} and the Laplace transform of the interference from HAPSs reduces to \eqref{L_IH_omnidirectional} in Remark \ref{rem:1}. The SINR threshold is $\tau_{\rm DL}=-10\,{\rm dB}$. When the number of HAPSs varies from $1$ to $16$, UEs with $r_U=200\,{\rm km}$ mainly connect to TBSs and the interference from HAPSs does not severely affect the connections. However, for UEs with $r_U<200\,{\rm km}$, the DL performance can be improved by adding more HAPSs into the area. When $N_H=1$, the coverage of UE at the center is 0.33, which is much lower than that of UEs covered by TBSs. When $N_H=16$, the UEs inside the hard-to-reach area can connect to a closer HAPS so that the DL coverage probability can reach 0.88. The handover between the TBS network and the HAPS network appears when $r_U$ is close to the boundary of the hard-to-reach area, therefore, the UE with $r_U=180\,{\rm km}$ faces higher TBS interference than the UE at $r_U=0\,{\rm km}$. When $N_H=4$, the DL coverage probability of UEs at $r_U=180\,{\rm km}$ is 0.39, which is much less than that of the UE at $r_U=0\,{\rm km}$. However, when $N_H=16$, the difference between them is reduced from 0.3 to 0.15. Therefore, deploying more HAPSs can help fill the coverage hole and also prevent performance gaps on the edge of the hard-to-reach area. 

\begin{figure}[ht]
    \centering
    \includegraphics[width=0.88\linewidth]{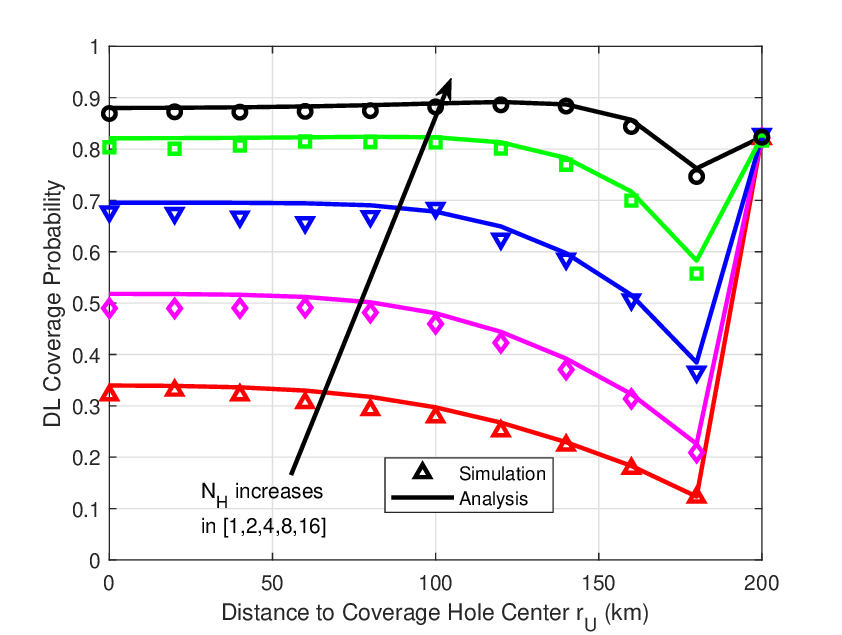}
    \caption{DL performance of UEs versus the distance to the coverage hole center $r_U$, when the HAPSs are equipped with omnidirectional antennas.}
    \label{fig:DL_without_directional_beamforming}
\end{figure}

In Fig. \ref{fig:DL_with_directional_beamforming}, we consider the case where HAPSs are equipped with directional antennas, and show the DL coverage performance in Theorem \ref{theo:DLCP} versus $r_U$, where the half of 3 dB beamwidth is $5^{
\circ}$. Compared to Fig. \ref{fig:DL_without_directional_beamforming}, directional beamforming can further improve DL coverage performance, especially when the number of HAPSs is large. When $N_H=4$, the performance of UE with $r_U=0\,{\rm km}$ can be improved by only $5.89\%$. When $N_H=8$ and $N_H=16$, the improvement can reach $8.41\%$ and $9.82\%$, respectively.

\begin{figure}[ht]
    \centering
    \includegraphics[width=0.88\linewidth]{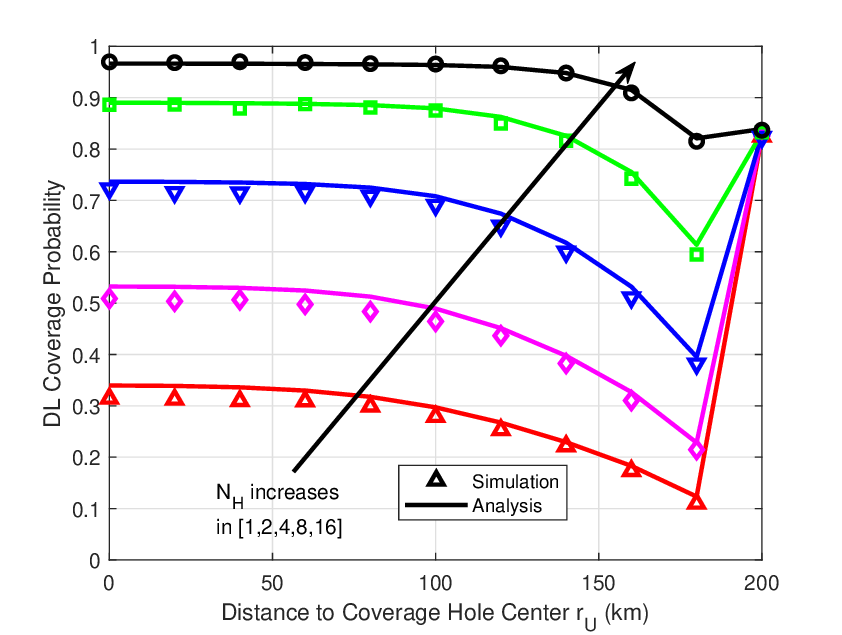}
    \caption{DL performance of UEs versus the distance to the coverage hole center $r_U$, when the HAPSs are equipped with directional antennas and half of 3 dB beamwidth is $5^{\circ}$.}
    \label{fig:DL_with_directional_beamforming}
\end{figure}

In Fig. \ref{fig:DL_Covpro_NH}, we show the relationship between the DL coverage probabilities of the UE at the center of the hard-to-reach area $\textbf{o}_U$ and the number of HAPSs $N_H$, considering different SINR decoding thresholds $\tau_{\rm DL}$ and half of 3 dB beamwidth values $\psi_b$. When $\tau_{\rm DL}=0\,{\rm dB}$ and $\psi_b>20^\circ$, the DL coverage probability cannot exceed 0.5 with fewer than 32 HAPSs. When the beamwidth becomes narrower, e.g. $\psi_b=5^\circ$ or $\psi_b=2.5^\circ$, at least 32 or 26 HAPSs are required to achieve a DL coverage probability of 0.6.
When the SINR decoding threshold decreases to $\tau_{\rm DL}=-10\,{\rm dB}$, the UE can achieve a high coverage probability. To achieve a DL coverage probability of 0.9, the wide-beam case with $\psi_b=80^{\circ}$ requires 18 HAPSs, whereas the narrow-beam case $\psi_b=2.5^\circ$ requires only 8 HAPSs. These results indicate that the required number of HAPSs strongly depends on both the antenna pattern and decoding ability. In practical scenarios, operators should jointly determine the number of HAPSs and antenna pattern design addressing the performance requirement and expense.

\begin{figure}[ht]
    \centering
    \includegraphics[width=0.88\linewidth]{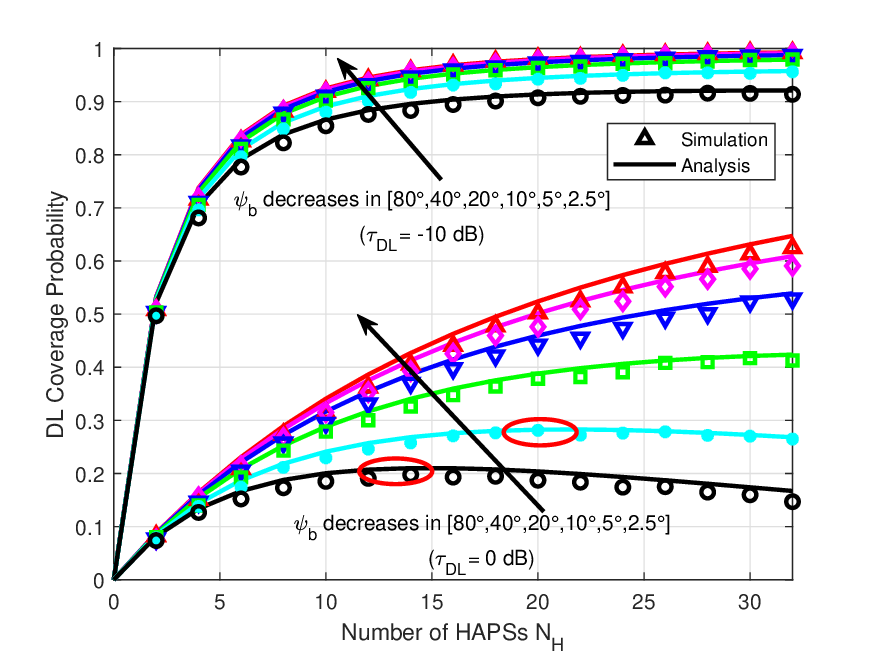}
    \caption{DL performance of the UE at the center of the hard-to-reach area versus the number of HAPSs $N_H$, when $\tau_{\rm DL}=-10\,{\rm dB}$ or $\tau_{\rm DL}=0\,{\rm dB}$.}
    \label{fig:DL_Covpro_NH}
\end{figure}

In Fig. \ref{fig:DL_Covpro_hH}, we further illustrate the relationship between the DL coverage probability and the HAPS altitude when HAPSs are equipped with directional antennas. We consider the case where the number of HAPSs is $N_H=16$. Since the UE is located at the center of the hard-to-reach area, it is far from the TBSs and the received interference mainly comes from the side-lobe signals from other HAPSs. As the HAPS altitude increases, the large-scale path loss becomes more severe, leading to lower received signal strength and reduced coverage probability. In addition to the simulations based on the uniform off-boresight angle approximation, we also conduct simulations based on a random footprint association policy, where each interfering HAPS determines its footprint according to the strongest average received power under main-lobe transmission, and aligns its beam with a randomly selected UE within this footprint. The results show that the gap between the random footprint association policy and uniform off-boresight angle approximation is less than 0.02.

\begin{figure}[ht]
    \centering
    \includegraphics[width=0.88\linewidth]{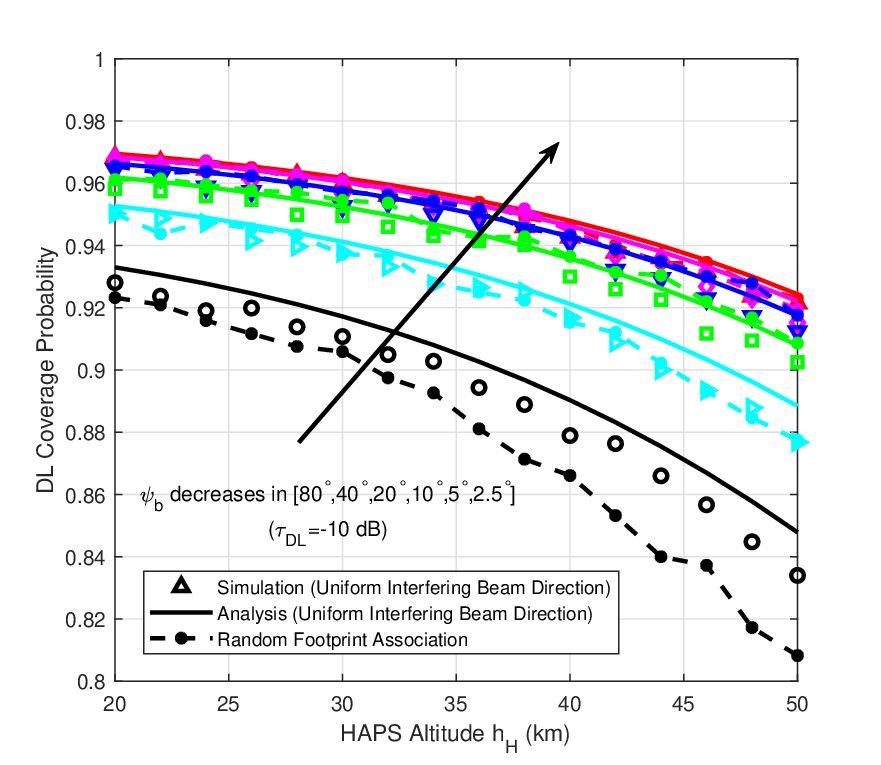}
    \caption{DL performance of the UE at the center of the hard-to-reach area versus the altitude of HAPSs $h_H$.}
    \label{fig:DL_Covpro_hH}
\end{figure}

\subsection{Uplink Performance}

\begin{figure*}[htbp]
\centering
\begin{minipage}{0.32\textwidth}
\centering 
\includegraphics[width=1\linewidth]{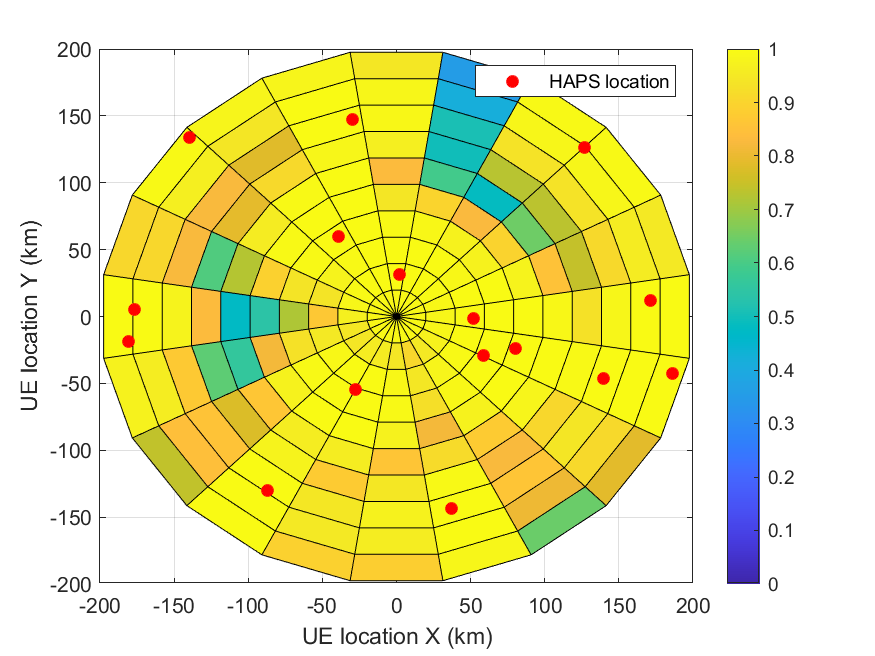}
\caption{Uplink coverage map with HAPS locations (Analysis, $\psi_b=2.5^{\circ}$).}
\label{fig:CovMap_UL_with_HAPS_Location_psi25_Ana}
\end{minipage}
% \end{figure}
\hfill
% \begin{figure}[htbp]
\begin{minipage}{0.32\textwidth}
\centering
\includegraphics[width=1\linewidth]{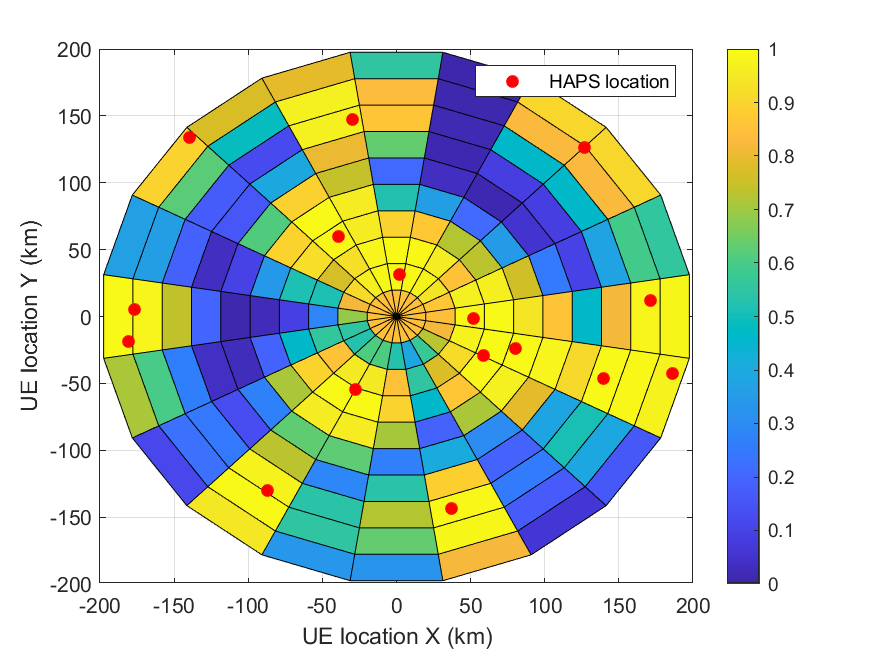}
\caption{Uplink coverage map with HAPS locations (Analysis, $\psi_b=5^{\circ}$).}
\label{fig:CovMap_UL_with_HAPS_Location_psi5_Ana}
\end{minipage}
% \end{figure}
\hfill
% \begin{figure}[htbp]
\begin{minipage}{0.32\textwidth}
\centering
\includegraphics[width=1\linewidth]{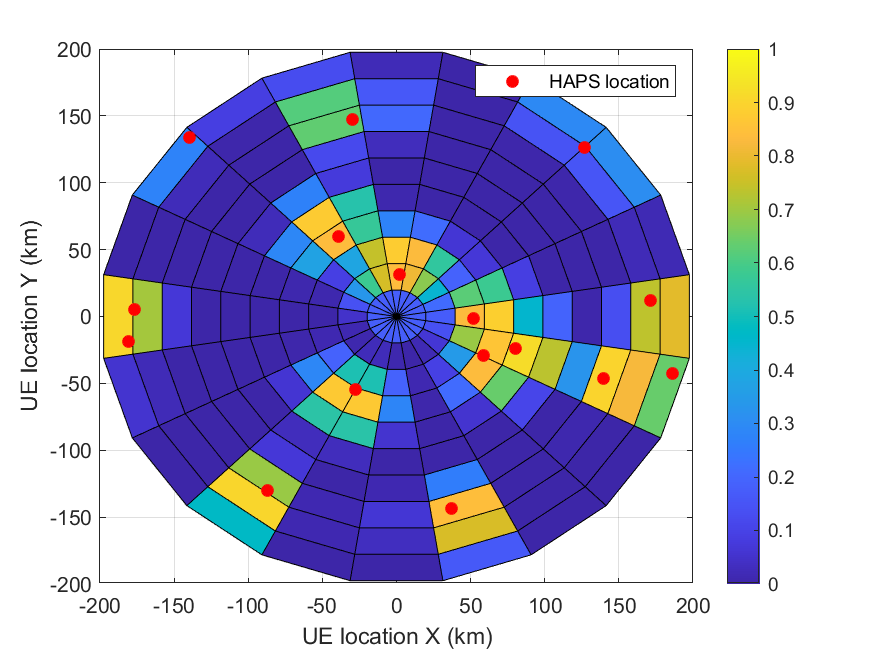}
\caption{Uplink coverage map with HAPS locations (Analysis, $\psi_b=10^{\circ}$).}
\label{fig:CovMap_UL_with_HAPS_Location_psi10_Ana}
\end{minipage}
\end{figure*}

In Theorem \ref{theo:ULCP_oneshot}, we first investigate the UL coverage performance conditioned on the location of the associated HAPS. Based on this, we draw the heatmaps of the UL coverage performance when $\psi_b=2.5^{\circ},5^{\circ},10^{\circ}$ in Figs. \ref{fig:CovMap_UL_with_HAPS_Location_psi25_Ana}, \ref{fig:CovMap_UL_with_HAPS_Location_psi5_Ana}, and \ref{fig:CovMap_UL_with_HAPS_Location_psi10_Ana}, respectively.  We consider $N_H=16$ HAPSs and assume that the SINR decoding threshold is $\tau_{\rm UL}=-10\,{\rm dB}$. When $\psi_b=10^{\circ}$, interference from neighboring UEs cannot be successfully controlled through directional beamforming. Therefore, only the UEs close to the HAPSs can be served with a probability higher than 0.8. When $\psi_b=5^{\circ}$, the UEs surrounding the HAPSs can reach a coverage probability close to 1, but there are still many areas that lack coverage. When $\psi_b=2.5^{\circ}$, most areas can be successfully covered, and only a few locations have coverage probability lower than 0.5 due to an inappropriate HAPS deployment. Therefore, the UL services require HAPSs to be equipped with directional antennas with a 3 dB beamwidth less than $5^{\circ}$. However, this may also lead to beam-switching complexity.

\begin{figure}[ht]
    \centering
    \includegraphics[width=0.88\linewidth]{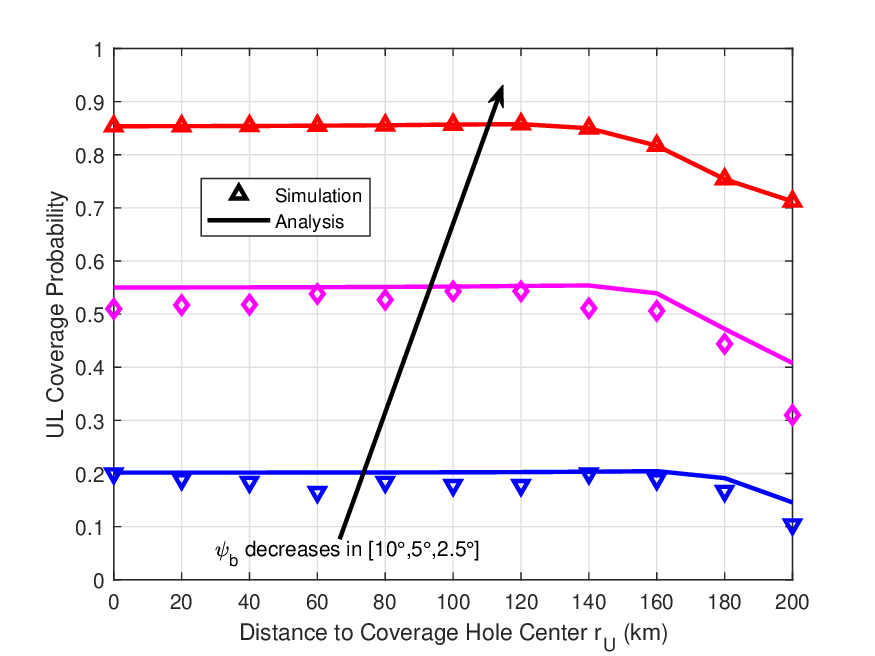}
    \caption{Average UL performance of HAPSs versus the distance to the coverage hole center $r_U$, when the HAPSs are equipped with directional antennas and the number of HAPSs $N_H=16$.}
    \label{fig:UL_with_directional_beamforming_RDM}
\end{figure}

As shown in Fig. \ref{fig:UL_with_directional_beamforming_RDM}, we evaluate the average UL coverage performance introduced in Theorem \ref{theo:ULCP}, where UEs are associated with HAPSs and the locations of HAPSs are random. When $\psi_b=10^{\circ}$, the UL coverage probability of UEs with $r_U<140\,{\rm km}$ is 0.2, which means that only 20\% of UEs within this distance can be successfully covered. However, when $\psi_b=5^{\circ}$ or $\psi_b=2.5^{\circ}$, the UL coverage probabilities can be improved to 0.55 and 0.85, respectively. The results in Fig. \ref{fig:UL_with_directional_beamforming_RDM} correspond to the heatmaps drawn in Figs. \ref{fig:CovMap_UL_with_HAPS_Location_psi25_Ana}, \ref{fig:CovMap_UL_with_HAPS_Location_psi5_Ana}, and \ref{fig:CovMap_UL_with_HAPS_Location_psi10_Ana}. When $r_U>140\,{\rm km}$ and $r_U$ increases, more UEs will connect to TBSs rather than HAPSs. Therefore, the UL coverage probabilities of HAPSs decrease since the association probabilities of HAPSs decrease.

\begin{figure}[ht]
    \centering
    \includegraphics[width=0.88\linewidth]{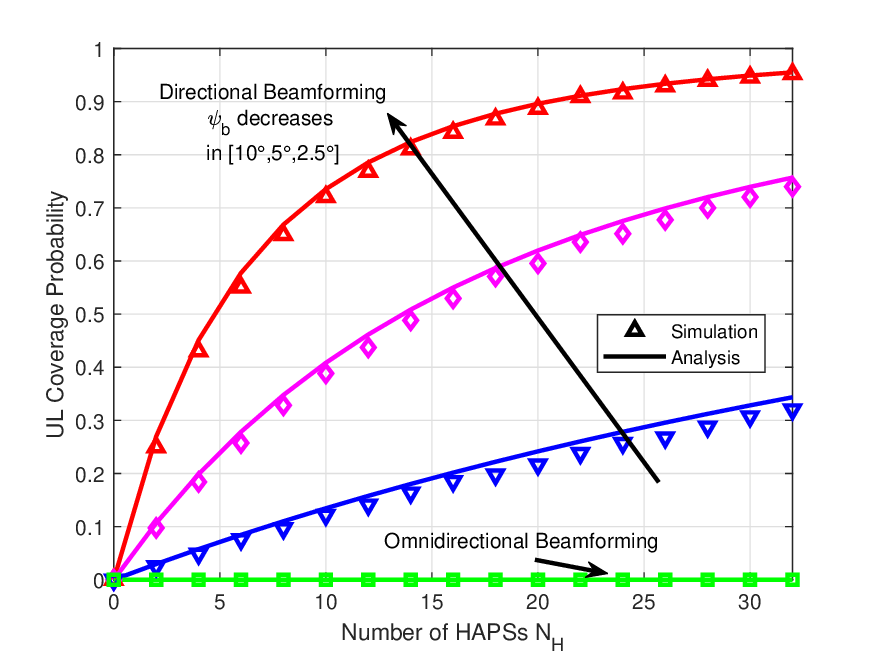}
    \caption{Average UL performance of the UE at $\textbf{o}_U$ versus the number of HAPSs $N_H$, when the HAPSs are equipped with directional antennas or omnidirectional antennas.}
    \label{fig:UL_Covpro_NH}
\end{figure}

In Fig. \ref{fig:UL_Covpro_NH}, we investigate the relationship between the average UL coverage probability of the UE at $\textbf{o}_U$ and the number of HAPSs $N_H$. When HAPSs are equipped with omnidirectional antennas, UL coverage probability is close to 0 due to the high mutual UE interference. When HAPSs are equipped with directional antennas and $\psi_b=10^{\circ}$, UL coverage probability is less than 0.35 when $N_H$ is less than 32. When $\psi_b=5^{\circ}$ or $\psi_b=2.5^{\circ}$, the minimum numbers of HAPSs to achieve a UL coverage probability of 0.7 are $N_H=26$ and $N_H=10$, respectively. Therefore, both the increase in the number of HAPSs and the decrease in the 3 dB beamwidth of directional antennas can improve the UL coverage performance. In practical scenarios, the number of HAPSs and beamwidth of directional antennas should be designed by jointly considering DL performance, UL performance, system complexity, and operational expense. In practical deployments, key system parameters, including the TBS density, UE density, HAPS altitude, and carrier frequency, depend on the specific deployment scenario. Therefore, the performance should be evaluated under scenario-specific parameter settings.

In Fig. \ref{fig:UL_Covpro_hH}, we fix the number of HAPSs as $N_H=16$ and investigate the impact of HAPS altitude on the UL coverage performance of the UE located at $\textbf{o}_U$. When $\psi_b=10^\circ$ or $\psi_b=5^\circ$, increasing the HAPS altitude weakens the received signal power from the serving UE, so that the average UL coverage probability decreases. When $h_H=20\,{\rm km}$, the UL coverage probabilities are 0.20 and 0.54 for $\psi_b=10^\circ$ and $\psi_b=5^\circ$, respectively. However, when $\psi_b=2.5^\circ$, the average UL coverage probability remains higher than 0.8 for $20\,{\rm km}<h_H<50\,{\rm km}$, where the optimal altitude of HAPSs is $h_H=32\,{\rm km}$ and the corresponding average UL coverage probability is 0.86. We also fix $N_H=16$ and evaluate the impact of UE transmit power $p_U^{\rm Tx}$ on the UL performance in Fig. \ref{fig:UL_Covpro_pU}. To achieve a coverage probability above 0.7, the HAPSs equipped with directional antennas with $\psi_b=2.5^\circ$ require $p_U^{\rm Tx}=16\,{\rm dBm}$. However, when $\psi_b=5^\circ$ and $\psi_b=10^\circ$, the UL coverage probabilities are below 0.68 and 0.3, respectively, which are below the target requirement.

\begin{figure}[ht]
    \centering
    \includegraphics[width=0.88\linewidth]{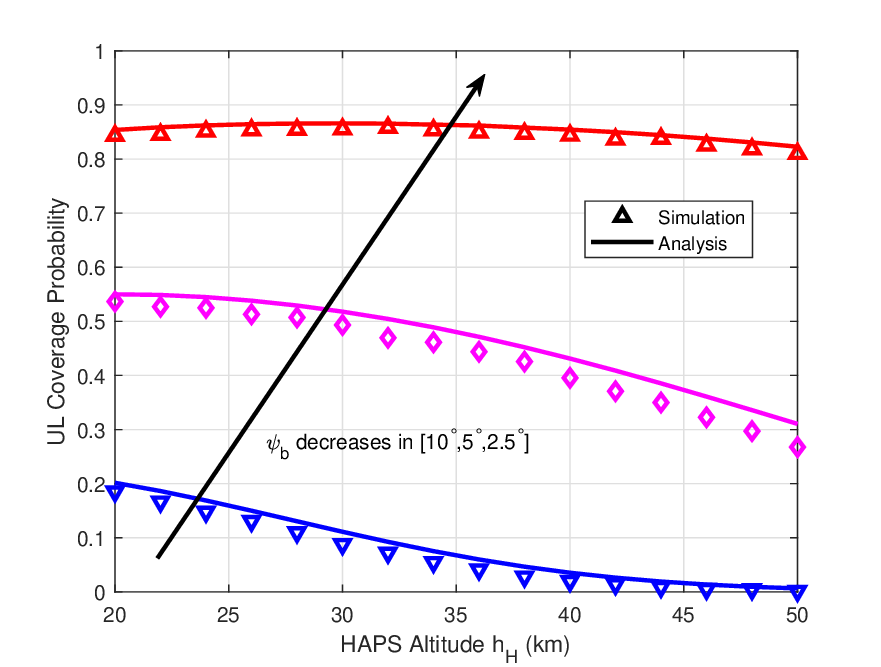}
    \caption{Average UL performance of the UE at the center of the hard-to-reach area versus the altitude of HAPSs $h_H$, when the HAPSs are equipped with directional antennas.}
    \label{fig:UL_Covpro_hH}
\end{figure}

\begin{figure}[ht]
    \centering
    \includegraphics[width=0.88\linewidth]{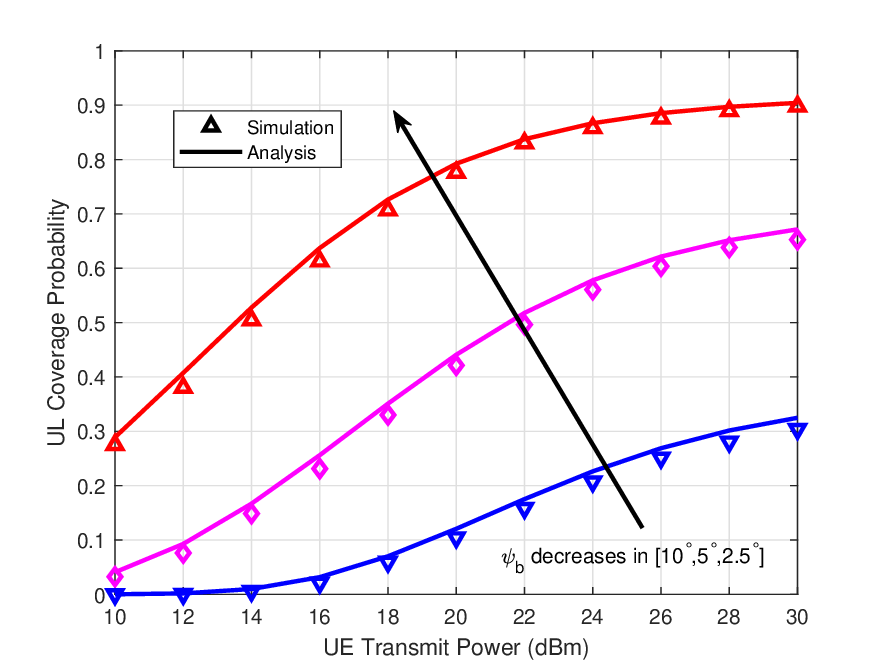}
    \caption{Average UL performance of the UE at the center of the hard-to-reach area versus transmit power of all UEs $p_U^{\rm Tx}$, when the HAPSs are equipped with directional antennas.}
    \label{fig:UL_Covpro_pU}
\end{figure}

In this paper, we model the hard-to-reach area as a circular region, which maintains both analytical tractability and representativeness. In real-world scenarios, hard-to-reach areas may be irregular due to terrain and environmental constraints. For areas that can be reasonably approximated by a circular region, the proposed model can be directly applied by selecting an appropriate effective radius. For more irregular areas, the proposed framework can be extended by modifying the distance distribution calculations according to the actual geometry, thereby enabling the evaluation of the coverage performance for users at different locations. We adopt the assumption that each user connects to its closest HAPS or closest TBS that provides the strongest average received power. This association rule provides a tractable baseline for evaluating the coverage performance. However, in realistic HAPS-terrestrial integrated networks, user association may also be affected by the backhaul constraints of each HAPS, load balancing, resource allocation among different HAPSs and TBSs. For mobile users and devices, the handover management among HAPSs and TBSs should also be addressed. Moreover, efficient scheduling for massive numbers of devices and users should be addressed in future work to reduce mutual interference among users and enhance spectral efficiency.

In this paper, we compare the coverage performance when HAPSs adopt either omnidirectional or directional antenna patterns, while TBSs are assumed to be equipped with omnidirectional antennas. This assumption provides a tractable baseline for evaluating the impact of HAPS directional beamforming on the coverage performance and required HAPS number. In practical heterogeneous systems, TBSs may also employ directional antenna patterns, which may affect HAPS/TBS association and terrestrial interference management.
In addition, HAPSs can be deployed not only in hard-to-reach areas but also in urban areas to enhance the quality of service for users. In such scenarios, the relative locations of users, HAPSs, and buildings may significantly affect the channel conditions. Therefore, altitude-dependent LoS probability models discussed in \cite{6863654} can be incorporated.

\section{Conclusion}
In this paper, we built a mathematical framework based on stochastic geometry to investigate the DL and UL coverage performance of a HAPS-based solution for hard-to-reach areas. We considered the coexistence of a HAPS network over the hard-to-reach area and a TBS network deployed only around the perimeter due to environmental and geographical constraints. We first analyzed the conditional DL performance and then derived the average DL coverage probability of UEs. Next, we studied the UL performance conditioned on the HAPS locations, and then derived the average UL coverage probability. We conducted Monte Carlo simulations to validate the analytical results. Our results showed that deploying a sufficient number of HAPSs can help fill the coverage hole in the hard-to-reach area and avoid the connection gaps caused by handover between HAPSs and TBSs. For UL services, we generated coverage maps based on the HAPS location information and further evaluated the average UL performance. The results also showed that narrower HAPS beams can significantly improve the coverage probability and reduce the number of HAPSs required to satisfy a target coverage probability. Therefore, the number of HAPSs and the 3 dB beamwidth should be jointly designed by considering the DL performance, UL performance, system complexity, and operational expense.

\appendices
% \clearpage
\section{Proof of Theorem \ref{theo:DLCP_dB}}\label{app:DLCP_dB}

In \eqref{SINRDLB}, we have formulated the DL SINR when the UE is associated with its closest TBS. We define $U_T=N_0+I_{T}^{\backslash 1}+I_H$ as a random variable representing the sum of interference and noise. $I_{T}^{\backslash 1}$ represents the interference from other TBSs except for the associated one, $I_H$ represents the interference from all HAPSs and $N_0$ represents the thermal noise power. Therefore, the DL SINR coverage probability when the UE is associated with its closest TBS can be calculated using:
\begin{equation}
\begin{array}{r@{}l}
    P&_{\rm DL}^T(\tau_{\rm DL}|d_T)=\mathbb{P}\{ {\rm SINR}_{\rm DL}^T>\tau_{\rm DL},Q=T|d_T\}\\
    &=\mathbb{P}\{Q=T|d_T\}\mathbb{P}\{ {\rm SINR}_{\rm DL}^T>\tau_{\rm DL}|Q=T,d_T\}\\
    &=\overline{F}_{D_H}(t_{T2H}(d_T))\mathbb{P}\{ {\rm SINR}_{\rm DL}^T>\tau_{\rm DL}|Q=T,d_T\}.
\end{array}
\end{equation}

Because $W_T^1$ follows the Gamma distribution with shape parameter $m_T$, we have:
\begin{equation}
\begin{array}{r@{}l}
    \mathbb{P}&\{ {\rm SINR}_{\rm DL}^T>\tau_{\rm DL}|Q=T,d_T\}\\
    &=\mathbb{E}_{U_T}\bigg[\mathbb{P}\bigg\{\frac{ p_TW_T^1 {d_T^1}^{-\alpha_T}}{U_T}>\tau_{\rm DL}\bigg\}\bigg]\\
    &=\mathbb{E}_{U_T}\bigg[\mathbb{P}\bigg\{W_T^1>\frac{\tau_{\rm DL} {d_T^1}^{\alpha_T}}{p_T } U_T\bigg\}\bigg]\\
    &\displaystyle=\mathbb{E}_{U_T}\bigg[\frac{\Gamma_u\bigg(m_T,m_T \frac{\tau_{\rm DL} {d_T^1}^{\alpha_T}}{p_T } U_T\bigg)}{\Gamma(m_T)}\bigg]\\
    &\displaystyle = \mathbb{E}_{U_T}\bigg[ \exp(-s_T U_T)\sum_{n=0}^{m_T-1}\frac{(s_TU_T)^n}{n!} \bigg]\\
    &\displaystyle = \sum_{n=0}^{m_T-1} \frac{s_T^n}{n!}\mathbb{E}_{U_T}\bigg[\exp(-s_T U_T)U_T^n\bigg]\\
    &\displaystyle =\sum_{n=0}^{m_T-1}\frac{s_T^n}{n!}(-1)^{n}\frac{\partial^n}{\partial s_T^n}\mathcal{L}_{U_T}(s_T|d_T),
\end{array}
\end{equation}
where we let $s_T=\frac{m_T \tau_{\rm DL} {d_T^1}^{\alpha_T}}{p_T}$. Since the channel gains between the UE and all APs are independent, and the point process of HAPSs and the point process of TBSs are also independent of each other, the Laplace transform of the total interference $U_T=N_0+I_T^{\backslash 1}+I_{H}$ can be calculated using:
\begin{equation}
    \begin{array}{r@{}l}
        &\mathcal{L}_{N_0+I_{T}^{\backslash 1}+I_H}(s_T|d_T)\\
        &=\mathcal{L}_{N_0}(s_T)\mathcal{L}_{I_{T}^{\backslash 1}}(s_T|d_T)\mathcal{L}_{I_H}(s_T|t_{T2H}(d_T)).
    \end{array}
    \end{equation}
The Laplace transform of thermal noise is $\mathbb{E}_{N_0}[\exp(-s_TN_0)]=\exp(-s_TN_0)$. Because the TBSs follow a PPP with density $\lambda_T$, the Laplace transform of the interference from the TBSs except for the associated one can be calculated using the probability generating functional (PGFL) of a Poisson point process \cite{haenggi2013stochastic}:
\begin{equation}
    \begin{array}{r@{}l}
    &\mathcal{L}_{I_{T}^{\backslash 1}}(s_T|d_T)\displaystyle\\
    &\overset{\triangle}{=}\mathbb{E}_{\{W_T^j\},\Phi_T\backslash \{\textbf{z}_T^1\}}\bigg[e^{-s_T\sum\limits_{\textbf{z}_T^{j}\in\Phi_T\backslash \{\textbf{z}_T^{1}\}}p_T W_T^j {d_T^j}^{-\alpha_T} }\bigg|d_T\bigg]\\
    &=\displaystyle\mathbb{E}_{\Phi_T\backslash \{\textbf{z}_T^1\}}\bigg[ \prod_{\textbf{z}_T^{j}\in\Phi_T\backslash \{\textbf{z}_T^1\}} \mathbb{E}_{W_T^j} \bigg[ e^{-s_T p_T W_T^j {d_T^j}^{-\alpha_T} }\bigg] \bigg|d_T\bigg]\\
    
    &=\displaystyle\mathbb{E}_{\Phi_T\backslash \{\textbf{z}_T^1\}}\displaystyle\bigg[\prod_{\textbf{z}_T^{j}\in\Phi_T\backslash \{\textbf{z}_T^1\}}\bigg(1+\frac{s_T  p_T {d_T^j}^{-\alpha_T}}{m_T}\bigg)^{-m_T}\bigg|d_T\bigg]\\
    &\displaystyle=\exp\bigg(-\int_{d_T}^{d_T^{\rm max}}\int_{-\pi}^{\pi} \lambda_T \\
    &\;\;\;\;\;\;\;\;\;\;\;\;\times \mathbbm{1}[r_{\rm in}\leq \sqrt{r_U^2+r^2+2r_U r \cos \theta} \leq r_{\rm out}] \\
    &\displaystyle\;\;\;\;\;\;\;\;\;\;\;\;\times \left. \bigg[1-\bigg(1+\frac{s_T p_T r^{-\alpha_T}}{m_T}\bigg)^{-m_T}\bigg]{\rm d}\theta r{\rm d}r\right)\\
    &\displaystyle=\exp\bigg(-\int_{d_T}^{d_T^{\rm max}}\lambda_T 2\theta_T(r) \\
    &\displaystyle\;\;\;\;\;\;\;\;\;\;\;\;\;\;\;\;\;\;\;\;\;\;\;\;\;\;\;\;\;\;\times\bigg[1-\bigg(1+\frac{s_T p_T r^{-\alpha_T}}{m_T}\bigg)^{-m_T}\bigg] r{\rm d}r\bigg).
\end{array}
\end{equation}

Also, because the locations of HAPSs follow a BPP, we can use the conclusion from \cite{7882710}. We denote $D$ as the distance from the UE to an arbitrary HAPS conditioned on being farther than $t_{T2H}(d_T)$, and its conditional PDF $f_{D}(d|t_{T2H}(d_T))$ is shown in \eqref{modifiedPDF}. Since the minimum distance to the HAPSs is $t_{T2H}(d_T)$, we can calculate the Laplace transform of the interference from HAPSs as:
\begin{equation}
\begin{array}{r@{}l}
    &\mathcal{L}_{I_H}(s_T|t_{T2H}(d_T))\\
    &\displaystyle\overset{\triangle}{=}\mathbb{E}_{\{W_H^k\},\Phi_H,\{\psi_{\rm DL}^k\}}\\
    &\;\;\;\;\;\;\;\;\;\;\;\;\;\bigg[e^{-s_T\sum\limits_{\textbf{z}_H^{k}\in\Phi_H}p_{H} W_H^k {d_H^k}^{-\alpha_H} 10^{\frac{G_{\rm ITU}^{\rm dB}(\psi_{\rm DL}^k)-G_{\rm ITU}^{\rm dB}(0)}{10}} }\bigg]\\
    &=\displaystyle\mathbb{E}_{\Phi_H}\bigg[ \prod_{\textbf{z}_H^{k}\in\Phi_H}\\
    & \;\;\;\;\;\;\;\;\;\displaystyle\mathbb{E}_{W_H^k,\psi_{\rm DL}^k}\bigg[ e^{-s_T p_{H} W_H^k {d_H^k}^{-\alpha_H} 10^{\frac{G_{\rm ITU}^{\rm dB}(\psi_{\rm DL}^k)-G_{\rm ITU}^{\rm dB}(0)}{10}} }\bigg] \bigg]\\
    &=\displaystyle\mathbb{E}_{\Phi_{H}}\bigg[\prod_{\textbf{z}_{H}^{k}\in\Phi_{H}}\int_{t_{T2H}(d_T)}^{d_H^{\rm max}}\int_0^{\pi}\\
    &\;\;\;\;\;\;\;\;\;\;\;\;\;\;\bigg(1+\frac{s_T  p_{H} {d_H^k}^{-\alpha_H} }{m_{H}}10^{\frac{G_{\rm ITU}^{\rm dB}(\psi)-G_{\rm ITU}^{\rm dB}(0)}{10}}\bigg)^{-m_{H}}\\
    &\;\;\;\;\;\;\;\;\;\;\;\;\;\;\;\;\;\;\;\;\;\;\;\;\;\;\;\;\;\;\;\;\;\;\displaystyle\times f_{\psi}(\psi)f_{D}(d|t_{T2H}(d_T)){\rm d}\psi{\rm d}d\bigg]\\
    &\displaystyle=\bigg[\int_{t_{T2H}(d_T)}^{d_H^{\rm max}}\int_0^{\pi}\bigg(1+\frac{s_T  p_{H} d^{-\alpha_{H}}}{m_{H}}10^{\frac{G_{\rm ITU}^{\rm dB}(\psi)-G_{\rm ITU}^{\rm dB}(0)}{10}}\bigg)^{-m_H}\\
    &\displaystyle\;\;\;\;\;\;\;\;\;\;\;\;\;\;\times\frac{1}{\pi}f_{D}(d|t_{T2H}(d_T)){\rm d}\psi{\rm d}d\bigg]^{N_H}.
\end{array}
\end{equation}

It is worth noting that if the downlink beam directions of interfering HAPSs were independently and isotropically distributed in 3-D space, the off-boresight angles $\psi_{\rm DL}^k$ would be independent and identically distributed (i.i.d.) variables with the PDF $f_{\psi}(\psi)=\frac{1}{2}\sin\psi, 0\leq\psi\leq\pi$. However, this isotropic model is not fully consistent with the considered HAPS transmission scenario, where HAPS beams are steered toward their scheduled ground UEs. Moreover, the distribution of scheduled users is affected by the HAPS locations and AP association, which would lead to multi-layer integration when deriving the exact off-boresight angle distribution.
To maintain analytical tractability and preserve the main coverage insights, we approximate the off-boresight angles $\psi_{\rm DL}^k$ as independent random variables with a common marginal distribution. In the absence of exact prior statistics for the beam directions, we adopt the uniform angular approximation  
$f_{\psi}(\psi)=\frac{1}{\pi}, 0\leq\psi\leq\pi$
as a tractable baseline model. 

As shown in \cite{6932503,8833522,10050345}, the tight upper bound on the upper incomplete Gamma function has been widely adopted to obtain tractable approximations of coverage probability under Nakagami-$m$ fading. Following these works, we use this approximation to avoid the high-order derivatives of the Laplace transform, and the downlink coverage probability can be approximated as:
\begin{equation}
\begin{array}{r@{}l}
    \mathbb{P}&\{{\rm SINR}_{\rm DL}^T>\tau_{\rm DL}|Q=T,d_T\}\\
    &\displaystyle=\mathbb{E}_{U_T}\bigg[\frac{\Gamma_u\bigg(m_T,m_T \frac{\tau_{\rm DL} {d_T^1}^{\alpha_T}}{p_T } U_T\bigg)}{\Gamma(m_T)}\bigg]\\
    &\displaystyle\approx\mathbb{E}_{U_T}\bigg[1-(1-e^{-\beta_2^T s_T U_T})^{m_T}\bigg]\\
    &\displaystyle =\mathbb{E}_{U_T}\bigg[\sum_{n=1}^{m_T}\binom{m_T}{n} (-1)^{n+1}\exp(-n\beta_2^T s_T U_T) \bigg]\\
    &\displaystyle =\sum_{n=1}^{m_T}\binom{m_T}{n} (-1)^{n+1}\mathbb{E}_{U_T}\bigg[\exp(-n\beta_2^T s_T U_T) \bigg]\\
    &\displaystyle =\sum_{n=1}^{m_T}\binom{m_T}{n} (-1)^{n+1}\mathcal{L}_{U_T}(n\beta_2^Ts_T|d_T),
\end{array}
\end{equation}
where $\beta_2^T=(m_T!)^{-\frac{1}{m_T}}$.

% \newpage
\section{Proof of Theorem \ref{theo:ULCP_oneshot}}\label{app:ULCP_oneshot}
Conditioned on the location of the closest HAPS, the UL coverage probability of the HAPS network at the UE is calculated using:
\begin{equation}
\begin{array}{r@{}l}
    P&_{\rm UL}^{H}(\tau_{\rm UL}|\textbf{z}_H^1)=\mathbb{P}\{ {\rm SINR}_{\rm UL}^{H}>\tau_{\rm UL},Q=H|\textbf{z}_H^1\}\\
    &=\mathbb{P}\{Q=H|\textbf{z}_H^1\}\mathbb{P}\{ {\rm SINR}_{\rm UL}^{H}>\tau_{\rm UL}|Q=H,\textbf{z}_H^1\}\\
    &=\overline{F}_{D_T}(t_{H2T}(\|\textbf{z}_H^1-\textbf{z}_U^0\|))\mathbb{P}\{ {\rm SINR}_{\rm UL}^{H}>\tau_{\rm UL}|Q=H,\textbf{z}_H^1\}.
\end{array}
\end{equation}
Let $U_U$ denote the sum of interference from grant-free interfering UEs $I_U$ and the thermal noise $N_0$, that is $U_U=I_U+N_0$. We denote $d_U^0=\|\textbf{z}_H^1-\textbf{z}_U^0\|$. We can calculate the conditional UL coverage probability as:
\begin{equation}
\begin{array}{r@{}l}

    \mathbb{P}&\{ {\rm SINR}_{\rm UL}^{H}>\tau_{\rm UL}|Q=H,\textbf{z}_H^1\}\\
    % &=\mathbb{E}_{U_U}\bigg[\mathbb{P}\bigg\{\frac{ p_U W_U^0 {d_U^0}^{-\alpha_H}}{U_U}>\tau_{\rm UL}\bigg\}\bigg]\\
    &=\mathbb{E}_{U_U}\bigg[\mathbb{P}\bigg\{W_U^0>\frac{\tau_{\rm UL} {d_U^0}^{\alpha_H}}{p_U } U_U\bigg\}\bigg]\\
    &=\mathbb{E}_{U_U}\bigg[\displaystyle\frac{\Gamma_u\bigg(m_{H},m_{H} \frac{\tau_{\rm UL} {d_U^0}^{\alpha_H}}{p_U } U_U\bigg)}{\Gamma(m_H)}\bigg]\\
    &\displaystyle = \mathbb{E}_{U_U}\bigg[ \exp(-s_U U_U)\sum_{n=0}^{m_H-1}\frac{(s_U U_U)^n}{n!} \bigg]\\
    &\displaystyle = \sum_{n=0}^{m_H-1} \frac{s_U^n}{n!}\mathbb{E}_{U_U}\bigg[\exp(-s_U U_U)U_U^n\bigg]\\
    &\displaystyle =\sum_{n=0}^{m_H-1}\frac{s_U^n}{n!}(-1)^{n}\frac{\partial^n}{\partial s_U^n}\mathcal{L}_{U_U}(s_U|\textbf{z}_H^1),
\end{array}
\end{equation}
where $s_U=\frac{m_H \tau_{\rm UL} {d_U^0}^{\alpha_H}}{p_U}$ and $\mathcal{L}_{U_U}(s_U|\textbf{z}_H^1)=\mathcal{L}_{I_U}(s_U|\textbf{z}_H^1)\mathcal{L}_{N_0}(s_U)$. The Laplace transform of thermal noise is $\mathcal{L}_{N_0}(s_U)=\exp(-s_U N_0)$, and the Laplace transform of the interference from grant-free interfering UEs is
\begin{equation}
\begin{array}{r@{}l}
    &\mathcal{L}_{I_U}(s_U|\textbf{z}_H^1)\\
    &\displaystyle\overset{\triangle}{=}\mathbb{E}_{\{W_U^i\},\Phi_U}\bigg[e^{-s_U\sum_{\textbf{z}_U^{i}\in\Phi_U}p_{U} G_{\rm UL}^{i} W_U^i {d_U^i}^{-\alpha_H} }\bigg|\textbf{z}_H^1\bigg]\\
    &=\displaystyle\mathbb{E}_{\Phi_U}\bigg[\prod_{\textbf{z}_U^{i}\in\Phi_U} \mathbb{E}_{W_U^i} \bigg[ e^{-s_U p_{U} G_{\rm UL}^{i} W_U^i {d_U^i}^{-\alpha_H} } \bigg] \bigg|\textbf{z}_H^1\bigg]\\
    &=\displaystyle\mathbb{E}_{\Phi_U}\bigg[\prod_{\textbf{z}_{U}^{i}\in\Phi_U} \bigg(1+\frac{s_U  p_{U} G_{\rm UL }^{i}{d_U^i}^{-\alpha_{H}}}{m_{H}}\bigg)^{-m_{H}} \bigg]\bigg|\textbf{z}_H^1\bigg]\\
    &\displaystyle = \exp\bigg(-\int_{\textbf{b}(\textbf{o}_U,r_{\rm in})}  \lambda_U \\
    &\;\;\times\bigg[1-\bigg(1+\frac{s_U  p_{U} {d_U^i}^{-\alpha_{H}}G_{\rm ITU}^{\rm linear}(r_U^i,\theta_U^i|\textbf{z}_H^1)}{m_{H}} \bigg)^{-m_H}\bigg]{\rm d}\textbf{z}_{U}^i\bigg)\\
    &\displaystyle = \exp\bigg(-\int_{0}^{r_{\rm in}} \int_{-\pi}^{\pi}  \lambda_U \bigg[ 1-\\
    &\displaystyle\;\;\bigg(1+\frac{s_U  p_{U} d(r,\theta|\textbf{z}_H^1)^{-\alpha_{H}}G_{\rm ITU}^{\rm linear}(r,\theta|\textbf{z}_H^1)}{m_{H}} \bigg)^{-m_H}\bigg]r{\rm d}\theta{\rm d}r\bigg),
\end{array}
\end{equation}
where
\begin{equation}
    d(r,\theta|\textbf{z}_H^1)=\sqrt{h_H^2+(r_H^1)^2+r^2-2r_H^1 r \cos (\theta_H^1-\theta)},
\end{equation}
and
\begin{equation}
    G_{\rm ITU}^{\rm linear}(r,\theta|\textbf{z}_H^1)=10^{\frac{G_{\rm ITU}^{\rm dB}(\psi_{\rm UL}(r,\theta|\textbf{z}_H^1))-G_{\rm ITU}^{\rm dB}(0)}{10}}.
\end{equation}

\ifCLASSOPTIONcaptionsoff
  \newpage
\fi

\bibliographystyle{IEEEtran}
\bibliography{ref}

\end{document}

%% file: notation.tex
% Bold lowercase: syntax \nb# where # is {a ... z, 0,1}
\def\nba{{\mathbf{a}}}
\def\nbb{{\mathbf{b}}}
\def\nbc{{\mathbf{c}}}
\def\nbd{{\mathbf{d}}}
\def\nbe{{\mathbf{e}}}
\def\nbf{{\mathbf{f}}}
\def\nbg{{\mathbf{g}}}
\def\nbh{{\mathbf{h}}}
\def\nbi{{\mathbf{i}}}
\def\nbj{{\mathbf{j}}}
\def\nbk{{\mathbf{k}}}
\def\nbl{{\mathbf{l}}}
\def\nbm{{\mathbf{m}}}
\def\nbn{{\mathbf{n}}}
\def\nbo{{\mathbf{o}}}
\def\nbp{{\mathbf{p}}}
\def\nbq{{\mathbf{q}}}
\def\nbr{{\mathbf{r}}}
\def\nbs{{\mathbf{s}}}
\def\nbt{{\mathbf{t}}}
\def\nbu{{\mathbf{u}}}
\def\nbv{{\mathbf{v}}}
\def\nbw{{\mathbf{w}}}
\def\nbx{{\mathbf{x}}}
\def\nby{{\mathbf{y}}}
\def\nbz{{\mathbf{z}}}
\def\nb0{{\mathbf{0}}}
\def\nb1{{\mathbf{1}}}

% Bold capital letters: syntax \nb# where # is {A ... Z}
\def\nbA{{\mathbf{A}}}
\def\nbB{{\mathbf{B}}}
\def\nbC{{\mathbf{C}}}
\def\nbD{{\mathbf{D}}}
\def\nbE{{\mathbf{E}}}
\def\nbF{{\mathbf{F}}}
\def\nbG{{\mathbf{G}}}
\def\nbH{{\mathbf{H}}}
\def\nbI{{\mathbf{I}}}
\def\nbJ{{\mathbf{J}}}
\def\nbK{{\mathbf{K}}}
\def\nbL{{\mathbf{L}}}
\def\nbM{{\mathbf{M}}}
\def\nbN{{\mathbf{N}}}
\def\nbO{{\mathbf{O}}}
\def\nbP{{\mathbf{P}}}
\def\nbQ{{\mathbf{Q}}}
\def\nbR{{\mathbf{R}}}
\def\nbS{{\mathbf{S}}}
\def\nbT{{\mathbf{T}}}
\def\nbU{{\mathbf{U}}}
\def\nbV{{\mathbf{V}}}
\def\nbW{{\mathbf{W}}}
\def\nbX{{\mathbf{X}}}
\def\nbY{{\mathbf{Y}}}
\def\nbZ{{\mathbf{Z}}}

% \mathcal: syntax \ncal# where # is {A ... Z}
\def\ncalA{{\mathcal{A}}}
\def\ncalB{{\mathcal{B}}}
\def\ncalC{{\mathcal{C}}}
\def\ncalD{{\mathcal{D}}}
\def\ncalE{{\mathcal{E}}}
\def\ncalF{{\mathcal{F}}}
\def\ncalG{{\mathcal{G}}}
\def\ncalH{{\mathcal{H}}}
\def\ncalI{{\mathcal{I}}}
\def\ncalJ{{\mathcal{J}}}
\def\ncalK{{\mathcal{K}}}
\def\ncalL{{\mathcal{L}}}
\def\ncalM{{\mathcal{M}}}
\def\ncalN{{\mathcal{N}}}
\def\ncalO{{\mathcal{O}}}
\def\ncalP{{\mathcal{P}}}
\def\ncalQ{{\mathcal{Q}}}
\def\ncalR{{\mathcal{R}}}
\def\ncalS{{\mathcal{S}}}
\def\ncalT{{\mathcal{T}}}
\def\ncalU{{\mathcal{U}}}
\def\ncalV{{\mathcal{V}}}
\def\ncalW{{\mathcal{W}}}
\def\ncalX{{\mathcal{X}}}
\def\ncalY{{\mathcal{Y}}}
\def\ncalZ{{\mathcal{Z}}}

% \mathbb: syntax \nbb# where # is {A ... Z}
\def\nbbA{{\mathbb{A}}}
\def\nbbB{{\mathbb{B}}}
\def\nbbC{{\mathbb{C}}}
\def\nbbD{{\mathbb{D}}}
\def\nbbE{{\mathbb{E}}}
\def\nbbF{{\mathbb{F}}}
\def\nbbG{{\mathbb{G}}}
\def\nbbH{{\mathbb{H}}}
\def\nbbI{{\mathbb{I}}}
\def\nbbJ{{\mathbb{J}}}
\def\nbbK{{\mathbb{K}}}
\def\nbbL{{\mathbb{L}}}
\def\nbbM{{\mathbb{M}}}
\def\nbbN{{\mathbb{N}}}
\def\nbbO{{\mathbb{O}}}
\def\nbbP{{\mathbb{P}}}
\def\nbbQ{{\mathbb{Q}}}
\def\nbbR{{\mathbb{R}}}
\def\nbbS{{\mathbb{S}}}
\def\nbbT{{\mathbb{T}}}
\def\nbbU{{\mathbb{U}}}
\def\nbbV{{\mathbb{V}}}
\def\nbbW{{\mathbb{W}}}
\def\nbbX{{\mathbb{X}}}
\def\nbbY{{\mathbb{Y}}}
\def\nbbZ{{\mathbb{Z}}}

% \mathfrak:
\def\nfrakR{{\mathfrak{R}}}

% Roman: {\rm } syntax \nrm# where # is {a ... z}
\def\nrma{{\rm a}}
\def\nrmb{{\rm b}}
\def\nrmc{{\rm c}}
\def\nrmd{{\rm d}}
\def\nrme{{\rm e}}
\def\nrmf{{\rm f}}
\def\nrmg{{\rm g}}
\def\nrmh{{\rm h}}
\def\nrmi{{\rm i}}
\def\nrmj{{\rm j}}
\def\nrmk{{\rm k}}
\def\nrml{{\rm l}}
\def\nrmm{{\rm m}}
\def\nrmn{{\rm n}}
\def\nrmo{{\rm o}}
\def\nrmp{{\rm p}}
\def\nrmq{{\rm q}}
\def\nrmr{{\rm r}}
\def\nrms{{\rm s}}
\def\nrmt{{\rm t}}
\def\nrmu{{\rm u}}
\def\nrmv{{\rm v}}
\def\nrmw{{\rm w}}
\def\nrmx{{\rm x}}
\def\nrmy{{\rm y}}
\def\nrmz{{\rm z}}

% Special symbols
\def\nbydef{:=}
\def\nborel{\ncalB(\nbbR)}
\def\nboreld{\ncalB(\nbbR^d)}
\def\sinc{{\rm sinc}}

% Theorems etc.
\newtheorem{lemma}{Lemma}
\newtheorem{thm}{Theorem}
\newtheorem{definition}{Definition}
\newtheorem{ndef}{Definition}
\newtheorem{nrem}{Remark}
\newtheorem{theorem}{Theorem}
\newtheorem{prop}{Proposition}
\newtheorem{cor}{Corollary}
\newtheorem{example}{Example}
\newtheorem{remark}{Remark}
\newtheorem{assumption}{Assumption}
	
%%%%%%%% Backwards compatibility

\newcommand{\ceil}[1]{\lceil #1\rceil}
\def\argmin{\operatorname{arg~min}}
\def\argmax{\operatorname{arg~max}}
\def\figref#1{Fig.\,\ref{#1}}%
\def\E{\mathbb{E}}
\def\EE{\mathbb{E}^{!o}}
\def\P{\mathbb{P}}
\def\pc{\mathtt{P_c}}
\def\rc{\mathtt{R_c}}   % rate coverage
\def\p{p}

\def\V{\operatorname{Var}}
\def\erfc{\operatorname{erfc}}
\def\erf{\operatorname{erf}}
\def\opt{\mathrm{opt}}
\def\R{\mathbb{R}}
\def\Z{\mathbb{Z}}

\def\LL{\mathcal{L}^{!o}}
\def\var{\operatorname{var}}
\def\supp{\operatorname{supp}}

\def\N{\sigma^2}
\def\T{\beta}							% Threshold = \beta_i
\def\sinr{\mathtt{SINR}}			% Signal to interference plus noise ratio
\def\snr{\mathtt{SNR}}
\def\sir{\mathtt{SIR}}
\def\ase{\mathtt{ASE}}
\def\se{\mathtt{SE}}

\def\calN{\mathcal{N}}
\def\FE{\mathcal{F}}
\def\calA{\mathcal{A}}
\def\calK{\mathcal{K}}
\def\calT{\mathcal{T}}
\def\calB{\mathcal{B}}
\def\calE{\mathcal{E}}
\def\calP{\mathcal{P}}
\def\calL{\mathcal{L}}
%\DeclareMathOperator{\Tr}{Tr}
%\DeclareMathOperator{\rank}{rank}
%\DeclareMathOperator{\Pois}{Pois}

%\DeclareMathOperator{\TC}{\mathtt{TC}}
%\DeclareMathOperator{\TCL}{\mathtt{TC_l}}
%\DeclareMathOperator{\TCU}{\mathtt{TC_u}}

% Fading
\def\l{\ell}
\newcommand{\fad}[2]{\ensuremath{\mathtt{h}_{#1}[#2]}}
\newcommand{\h}[1]{\ensuremath{\mathtt{h}_{#1}}}

\newcommand{\err}[1]{\ensuremath{\operatorname{Err}(\eta,#1)}}
\newcommand{\FD}[1]{\ensuremath{|\mathcal{F}_{#1}|}}

%% Symbols changed
% \def\i{\mathbf{1}}					% changed to \nb1
% \def\d{\mathrm{d}}					% changed to \nrmd
% \def\L{\mathcal{L}}					% changed to \ncalL
% \begin{definition}					% changed to \begin{ndef}

% \l also gives problems. Use \ell after defining it if needed.

%% D2D def
\def\Bx{{\mathcal{B}}^x}
\def\Bxx{{\mathcal{B}}^{x_0}}
\def\jx{y}
\def\m{(\bar{n}-1)}
\def\mm{\bar{n}-1}
\def\Nx{{\mathcal{N}}^x}
\def\Nxo{{\mathcal{N}}^{x_0}}
\def\wj{w_{jx_0}}
\def\uij{u_{jx}}
 \def\yj{y}
 \def\yjx{y}
 \def\zjx{z_x}
 \def \tx {y_0}
 \def \htx {h_0}

\def\rx{z_{1}}
\def\ry{z_{2}}

\def\Rx{Z_{1}}
\def\Ry{Z_{2}}

%% fading
\def \hyxx {h_{y_{x_0}}}
\def \hyx {h_{y_x}}

\def\nbb1{\mathbbm{1}}
\def\yi{\textbf{y}_i}
\def\yy{\textbf{y}_1}
\def\xx{\textbf{x}_0}
\def\wj{\textbf{w}_j}
\def\od{\textbf{o}_d}
\def\oc{\textbf{o}_c}
\def\ie{{\em i.e. }}
\def\eg{{\em e.g. }}
\def\iid{{\em i.i.d. }}
\def\gi{G_{\yi}}
\def\g1{G_{\yy}}
\def\gj{G_{\wj}}
\def\hi{H_{\yi}}
\def\h1{H_{\yy}}
\def\hj{H_{\wj}}
\def\avg{\rm avg}

\def\rmnuma{\rm\uppercase\expandafter{\romannumeral1}}
\def\rmnumb{\rm\uppercase\expandafter{\romannumeral2}}
\def\rmnumc{\rm\uppercase\expandafter{\romannumeral3}}
\def\rmnumd{\rm\uppercase\expandafter{\romannumeral4}}
\def\rmnume{\rm\uppercase\expandafter{\romannumeral5}}
\def\rmnumf{\rm\uppercase\expandafter{\romannumeral6}}